\documentclass[11pt]{article}

\usepackage{ifthen}
\usepackage{mdwlist}
\usepackage{fullpage}
\usepackage{amsmath,amsfonts,amsthm}
\usepackage{bm}
\usepackage{enumitem}
\usepackage{graphicx}
\usepackage[ruled,linesnumbered]{algorithm2e}
\usepackage{xspace}

\usepackage[pdftex]{color}
\newcommand{\nsi}[1]{{\textcolor{blue}{\bf [Nicole: #1]}}}

\newcommand{\rdk}[1]{{\textcolor{red}{\bf [Bobby: #1]}}}
\definecolor{darkblue}{rgb}{0,0.08,0.45}

\newcommand{\ccount}{{\chi}}

\def\eps{\epsilon}

\def\to{\rightarrow}
\def\E{{\bf E}}

\def\cale{{\cal E}}

\newcommand{\comment}[1]{}

\newcommand{\prob}[2][]{\text{\bf Pr}\ifthenelse{\not\equal{}{#1}}{_{#1}}{}\!\left[#2\right]}
\newcommand{\expect}[2][]{\text{\bf E}\ifthenelse{\not\equal{}{#1}}{_{#1}}{}\!\left[#2\right]}

\newtheorem{thm}{Theorem}[section]

\newtheorem{lem}[thm]{Lemma}
\newtheorem{prop}[thm]{Proposition}
\newtheorem{claim}[thm]{Claim}




\newcommand{\bg}[1]{\medskip\noindent{\bf #1}}







\newtheorem{theorem}{Theorem}

\newtheorem{definition}{Definition}

\newcommand{\ignore}[1]{}

\newcommand{\Prb}[1]{\mathbb{P}\left[#1\right]}

\title{An Analysis of One-Dimensional Schelling Segregation}
\author{
Christina Brandt\thanks{Department of Computer Science, Stanford University.  Supported by the National Science Foundation Graduate Research Fellowship under Grant No. DGE-1147470.  Email: {\tt cbrandt@stanford.edu}} 
\and
Nicole Immorlica\thanks{Department of Electrical Engineering and Computer Science, Northwestern University.  Supported in part by NSF awards CCF-1055020 and SMA-1019169, a Microsoft New Faculty Fellowship, and an Alfred P.~Sloan Foundation Fellowship.  Email: {\tt nickle@eecs.northwestern.edu}} 
\and 
Gautam Kamath\thanks{Department of Computer Science, Cornell University.  Email: {\tt gck43@cornell.edu}}
\and 
Robert Kleinberg\thanks{Department of Computer Science, Cornell University.
Supported in part by NSF awards CCF-0643934 and AF-0910940, AFOSR grant FA9550-09-1-0100, a Microsoft Research New Faculty Fellowship, an Alfred P.~Sloan Foundation Fellowship, and a Google Research Grant.  Email: {\tt rdk@cs.cornell.edu}}
}
\date{}

\begin{document}


\maketitle

\begin{abstract}
We analyze the Schelling model of segregation in which a society of $n$ individuals live in a ring.  Each individual is one of two races and is only satisfied with his location so long as at least half his $2w$ nearest neighbors are of the same race as him.  In the dynamics, randomly-chosen unhappy individuals successively swap locations.  We consider the average size of monochromatic neighborhoods in the final stable state.  Our analysis is the first rigorous analysis of the Schelling dynamics.  We note that, in contrast to prior approximate analyses, the final state is nearly integrated: the average size of monochromatic neighborhoods is independent of $n$ and polynomial in $w$.  

\end{abstract}





\section{Introduction}
\label{sec:intro}

In 1969, economist Thomas Schelling introduced a landmark model of racial segregation.
Elegantly simple and easy to simulate, it provided a persuasive 
 explanation of an unintuitive result: that local behavior can cause global effects that are undesired by all~\cite{schelling_models_1969}.  In Schelling's model, individuals of two races, denoted $x$ and $o$, are placed in proximity to one another, either in a line (the one-dimensional model) or in a grid (the two-dimensional model).  This represents a mixed-race city, where individuals of different races live in close proximity.  Individuals are satisfied if at least a fraction $\tau$ of the other agents in a small local neighborhood around them are of the same type.  Unhappy agents can move locations, either by inserting themselves into new positions or exchanging locations with other agents.  Schelling showed via small simulations that global segregation can occur even when no individual prefers segregation.  In his experiments, he found that on average, an individual $i$ with $\tau=\tfrac12$ ended up in a significantly more segregated neighborhood with approximately $80\%$ of $i$'s neighbors with $i$'s type.


The striking contrast between individual preferences and global effects captured the imaginations of sociologists, econ\-omists and physicists.  Schelling's model eloquently argues that while all individuals in a community may prefer integration, the global consequence of their actions may be complete segregation.  Empirical evidence, found through surveys and statistics of segregated communities, indicates that the effects of local organization seen in Schelling's spatial proximity model may also lead to real-world segregation, whether by ethnicity~\cite{
alba_minority_1993, 
clark_geography_2008, 
clark_understanding_2008,
emerson_does_2001,
farley_continued_1993, 
massey_migration_1994, 
trappenburg_segregation_2006,
van_der_laan_bouma-doff_involuntary_2007,
vargas_causes_2006},
religion~\cite{don-yehiya_religion_1999},
or socioeconomic factors~\cite{
benard_wealth_2007, 
benenson_schelling_2009, 
mccrea_explaining_2009}. 

However, it is surprisingly difficult to analytically prove or even rigorously define the segregation phenomenon observed qualitatively in simulations.  Much of this difficulty lies in the fact that the dynamics may converge to a variety of states (complete segregation, complete integration, and various partially segregated states), and so the underlying Markov chain does not have a well defined unique stationary distribution.  Prior work~\cite{young_individual_2001,zhang_dynamic_2004,zhang_residential_2004,zhang_tipping_2011} circumvents this difficulty by introducing perturbations in the dynamics, allowing individuals to perform detrimental actions with vanishingly small probability.  The research then analyzes the degree of segregation in stochastically stable states, finding generally that as time approaches infinity, complete segregation is inevitable. 


We instead analyze the one-dimensional segregation dynamics directly, providing the first rigorous analysis of an unperturbed Schelling model.
Our model considers a society of $n$ individuals arranged in a ring network.  We start from a random initial configuration in which each individual is assigned type $x$ or $o$ independently and uniformly at random.  We parameterize the neighborhood size in the Schelling dynamics by $w$ and then consider the model of dynamics under which random pairs of unhappy individuals of opposite types trade places in each time step.  In the initial configuration, the average size of a monochromatic neighborhood is constant.  We show that with high probability, the dynamics converge to a stable configuration.  
In stark contrast to previous analyses of approximate dynamics, we prove that once the dynamics converge, most individuals reside in nearly integrated neighborhoods; 
that is, the average run length in the final configuration is independent of $n$ and only polynomial in $w$.  Thus, contrary to the established intuition, the local dynamics of Schelling's model do not induce global segregation in proportion to the size of the society but rather induce only a small degree of local segregation.  This does not contradict empirical studies, which are often performed over small local populations, but indicates that the results found in smaller communities do not necessarily generalize to larger populations.  Our results are in accord with empirical studies of residential segregation in large populations~\cite{clark1986residential,farley_continued_1993,white1986segregation} which consistently find amounts of spatial autocorrelation that are intermediate between purely random and completely segregated configurations.

\paragraph{Our techniques}
As is common in the analysis of non-stationary random processes on graphs
and other discrete structures, our process can be analyzed by defining state 
variables whose expected values at any time can easily be estimated
based on their values in the preceding time step.  
The time-evolution of these state
variables, after a suitable renormalization, can be approximated with
a continuous-time vector valued process that is not random at all,
but satisfies a differential equation obtained from the leading-order
terms in the aforementioned relations between successive
time steps.  
Wormald~\cite{wormald-annals,wormald} supplied a general theorem that
rigorously justifies the use of such differential equation 
approximations in a wide range of 
cases. 
In the context of Schelling segregation, the natural parametrization
of the state is a vector with infinitely many components,
reflecting the normalized frequency of each finite string of
$x$'s and $o$'s as one looks at all labeled subintervals of the 
ring.  
There are two reasons why it is not 
straightforward to apply Wormald's technique
in this setting.  
First, the system of differential equations
has infinitely many variables and infinitely
long dependency chains;
 Wormald's technique only applies
when there is no infinite sequence of distinct variables
$y_1,y_2,\ldots$ such that the derivative of $y_i$ depends
on the value of $y_{i+1}$ for all $i$.  Second, the 
continuous-time system is much too complex to analyze its solution
directly, so it is unclear how to extract any meaningful
bounds on the eventual distribution of run-lengths 
by working directly with the differential equation.

We circumvent the first
difficulty by partitioning the ring into bounded-sized pieces
using small separators, and analyzing a different random
process taking place on the partitioned graph.  A coupling
argument shows that after running the bounded-size process for
$O(n)$ steps, with high probability its state vector
closely approximates
the relevant components of the original state vector.
The new process has only finitely many state variables,
so Wormald's technique is applicable.
We believe that this technique of partitioning along small
separators to remove weak long-range dependencies may be 
of use in other applications of Wormald's technique that
suffer from infinitely long dependency chains.

To overcome the second difficulty, rather than analyzing
the differential equation solution directly, we focus on 
the presence of a particular configuration that we call
a \emph{firewall}: a string of $w+1$ consecutive individuals
of the same type.  These configurations are stable for the
segregation process: the neighborhood of each element in a firewall contains 
at least $w$ elements of its own type, so for $\tau=\tfrac{1}{2}$, 
once a firewall is formed, it cannot be subsequently broken.  
Thus, to prove that a typical site never belongs to a
monochromatic run of superpolynomial length, it suffices
to show that firewalls of both colors form within a distance of
$\operatorname{poly}(w)$ on both sides of a given site, with
high probability.  We prove this by defining a special 
type of configuration called a \emph{firewall incubator},
which occurs with probability $\Omega(1)$ 
at any given site in the initial configuration, and has
probability $1/\operatorname{poly}(w)$ of developing into
a firewall wherever it occurs in the initial configuration.
The proof of the latter fact depends on a symmetry 
property of the differential equation: it is invariant
under the $\mathbb{Z}/(2)$ action that exchanges $x$'s and $o$'s,
and therefore the fixed points of this $\mathbb{Z}/(2)$ action
are an invariant set for the differential equation.
This symmetry allows us to reduce our problem to the
analysis of a simpler process in which every step
consists of selecting a single random site and changing
its color if it is unhappy.

\subsection{Related Work}
Schelling's proximity model of segregation, first introduced in 1969~\cite{schelling_models_1969}, inspired significant research into understanding the dynamics of prejudice and self-isolation of communities.   Schelling defined a one-dimensional proximity model in which a community is represented by individuals placed next to one another in a line.  Each individual's neighborhood is composed of the $w$-closest elements on either side.  In Schelling's simulations, he let $w=4$, so each element's neighborhood contained $8$ elements: the $4$ nearest elements on each side.  The satisfaction of the individuals in the line was parametrized by a single global tolerance value, $\tau$.  For any individual $i$, if less than $2w\tau$ of $i$'s neighbors are of $i$'s type, then $i$ was considered unhappy.  Unhappy individuals were chosen one at a time and inserted into the nearest position where their tolerance requirement would be satisfied.  In small simulations (with $N=60$ agents), he found that even when $\tau\leq \frac{1}{2}$, most agents ended up in fully segregated neighborhoods~\cite{schelling_models_1969}.  Schelling continued to build upon this initial work, introducing a two-dimensional model where agents were placed on a partially empty grid.  Unhappy agents could move to empty positions on the grid where they would become happier.  Again, via small simulations, Schelling concluded that even with $\tau\leq \frac{1}{2}$, segregation was essentially inevitable~\cite{schelling_dynamic_1971, schelling_strategy_1980}.  Significant research has been done to extend the Schelling spatial proximity model~\cite{
bardea_back_2011,
benard_wealth_2007, 
benenson_schelling_2009,
dalasta_statistical_2008,
gerhold_limit_2006,
grauwin_dynamic_2009,
henry_emergence_2011,
singh_schellings_2007} 
and analyze it~\cite{
bardea_back_2011,
dalasta_statistical_2008,
grauwin_dynamic_2009,
pancs_schellings_2007,
zhang_dynamic_2004, 
zhang_residential_2004,
zhang_tipping_2011,
young_individual_2001}.

Young was the first to present a more rigorous analysis of the model~\cite{young_individual_2001}.  He utilized the technique of stochastic stability, developed in evolutionary game theory, to analyze a variant of the one-dimensional Schelling spatial proximity model.  In Young's version of the model, agents are placed on a ring and have neighborhood width $w=1$ and tolerance $\tau = \frac{1}{2}$.  Given values $0 < a < b < c$ and $0 < \eps < 1$, at each time step, two agents are chosen at random and trade places with probability $1$ if the trade makes both happy, $\eps^a$ if one changes from unhappy to happy and the other vice-versa, $\eps^b$ if both are initially happy and one becomes unhappy, and $\eps^c$ if both change from happy to unhappy.  Young utilizes the concept of stochastically stable states in a coordination game to analyze the Markov chain and its possible perturbations, concluding that with high probability as $\eps \to 0$, total segregation will result.  Further generalizations and variants of this model have been rigorously analyzed by Zhang~\cite{zhang_dynamic_2004, zhang_residential_2004, zhang_tipping_2011}, Barde~\cite{bardea_back_2011}, Dall'Asta et al.~\cite{dalasta_statistical_2008}, Pancs and Vriend~\cite{pancs_schellings_2007}, Grauwin~\cite{grauwin_dynamic_2009}, and others.

Our model differs from previous research and returns to a model closer to Schelling's original in that agents do not take utility-decreasing moves and are not analyzed in terms of bounded neighborhoods.  We find that these simple differences lead to substantially altered results.

Stochastic models of residential segregation fit within the broader
scope of social science models that study the aggregate behavior of
large networks of agents each individually executing very simple,
myopic, often randomized procedures to update their behavior in
response to the behavior of their neighbors.  Highlights of this
line of work include evolutionary game theory and the study
of stochastically stable states~\cite{foster-young,kmr,young93}, the
analysis of coordination games played on 
networks~\cite{ellison,montanari-saberi},
analysis of repeated best-response dynamics in large
games~\cite{blume95,gmv}, and research drawing explicit
parallels between statistical mechanics models and the
dynamics of large-population games~\cite{blume93}.

Finally, our paper belongs to a long line of papers in theoretical computer science
and discrete probability that apply differential equations to analyze
the dynamics of non-stationary random processes.
An early application of this technique is Karp and 
Sipser's~\cite{karp-sipser} analysis of a random greedy matching 
algorithm in random graphs.  Differential equations have 
also been applied to 
analyze algorithms for random {\sc $k$-Sat} 
instances~\cite{achlioptas,chen-franco,CGHS04,frieze-suen}, 
study
component sizes in random graphs~\cite{aldous,molloy-reed},
and to analyze ``Achlioptas processes'' in which edges are 
added to an initially empty graph by an algorithm selecting
among a bounded number of random choices~\cite{ADS-explosive,bohman-frieze,FGS-embracing,riordan-warnke,spencer-wormald}.
Wormald~\cite{wormald-annals,wormald} provides very general
conditions under which differential equation
approximations such as these are guaranteed to have
$o(n)$ additive error in the large-$n$ limit.

%

\section{Preliminaries}
\label{sec:prelim}
We consider a society of $n$ individuals.  An individual's type is either $x$ or $o$, with a probability $p$ of being $x$ and $(1-p)$ of being $o$.  Here we take $p=1/2$.  Individuals live in a ring network represented by an $n$-node cycle.  At each point in time, there is a bijective mapping between individuals and nodes.  Each individual lives in a {\it neighborhood}, defined to be his $2w+1$ nearest neighbors (including himself) for a parameter $w \ll n$, i.e., the neighborhood of an individual at node $i$ in the ring consists of the set of individuals at nodes $\{(i-w)\mod n,\ldots,(i+w)\mod n\}$.  The parameter $w$ is called the {\it window size}.

We say an individual is {\it happy} if at least a $\tau$ fraction of his neighbors are of the same type as him.  The parameter $\tau$ is called the {\it tolerance parameter} and here is assumed to be $1/2$.  At any given time step, two individuals are chosen uniformly at random and swap nodes according to the following rules.  If both individuals are unhappy, are of opposite types, and would therefore be happy\footnote{Here we are applying the assumption that $\tau=1/2$.  When $\tau>1/2$ an unhappy individual may remain unhappy after swapping with an oppositely colored unhappy individual.  In such a case, a natural modeling assumption is that the swap still takes place as long as each individual has at least as many neighbors of the same type at his new location as at his former location.} in the other's node, then they swap nodes.

A {\it block} is a sequence of adjacent sites.  A {\it run} is a block whose nodes are identically labeled.  (Throughout the paper, we use the term {\em label} to denote the type of the individual living at a given node.)

A key observation underlying our analysis of the Schelling process is that individuals in large enough runs never have an incentive to move. 
Define a {\it firewall} as a run of length at least $w+1$; a firewall
is either an {\em $x$-firewall} or an {\em $o$-firewall} depending on
the labels of its nodes.
For a segregation process with tolerance $\tau=\tfrac{1}{2}$,  since individuals living in a firewall have at least $w$ adjacent neighbors of the same type, all elements in the firewall are happy and will remain so, independent of the labels of the nodes around the firewall.  
Therefore, once a firewall is created, no individual in the firewall will move and the configuration will remain stable for the remainder of the process.

In addition, the existence of at least one firewall in the initial configuration also guarantees that the process will eventually reach a \emph{frozen configuration} in which no further swaps are possible.
\begin{prop}\label{prop:frozenstate}
Consider the segregation process with window size $w$ 
on a ring network of size $n$.  For any fixed $w$,
as $n \to \infty$, the probability that the process 
eventually reaches a frozen configuration converges to 1.
\end{prop}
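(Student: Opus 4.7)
The plan is to establish something stronger than the stated proposition by a deterministic potential function argument that does not rely on probabilistic features of the initial configuration. I will exhibit a quantity $\Psi$ that is bounded above, strictly increases with every valid swap, and therefore can change only finitely many times. This forces the process to reach a frozen configuration in a bounded number of valid swaps, regardless of $n$ and the initial labeling, which is more than enough to prove the proposition.

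Let $\Psi$ be the number of unordered pairs $\{i,j\}$ of positions whose cyclic distance is at most $w$ and which satisfy $L(i) = L(j)$. There are exactly $nw$ such unordered pairs in the ring, so $\Psi \leq nw$ deterministically. The heart of the proof is the claim that each valid swap increases $\Psi$ by at least $2$. Consider a swap between $A$ (type $T_A$) at position $i$ and $B$ (type $T_B$) at position $j$, and for $k\in\{i,j\}$ and $T\in\{T_A,T_B\}$ let $s_T^k$ denote the number of type-$T$ labels in the pre-swap neighborhood $N(k)$. The swap conditions---both agents unhappy pre-swap and both happy post-swap---give $s_{T_A}^i \leq w-1$, whence $s_{T_B}^i = 2w - s_{T_A}^i \geq w+1$; symmetrically $s_{T_A}^j \geq w+1$ and $s_{T_B}^j \leq w-1$. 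Since labels change only at $i$ and $j$, accounting separately for the pairs incident to $i$ and those incident to $j$ yields
\[
\Delta \Psi \;=\; (s_{T_B}^i - s_{T_A}^i) + (s_{T_A}^j - s_{T_B}^j) - 2\cdot \mathbf{1}[|i-j|\leq w] \;\geq\; 2+2-2 \;=\; 2,
\]
where the correction $-2$ appears when the neighborhoods of $i$ and $j$ overlap because then the pair $\{i,j\}$ is counted once in each of the first two sums yet contributes $0$ to $\Psi$ both before and after the swap.

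Combining these facts shows that at most $nw/2$ valid swaps can occur along any trajectory of the process, so a frozen configuration is reached in finite---indeed bounded---time with probability $1$ for every $n$, which proves the proposition. The main subtlety is identifying the correct $-2$ correction when $|i-j|\leq w$, where one must avoid double-counting the pair $\{i,j\}$ while aggregating the contributions at $i$ and at $j$. The key inequality making the bookkeeping go through is the strict slack $s_{T_B}^i \geq w+1$, which follows from the strict unhappiness of $A$ at $i$ rather than merely from $B$'s happiness at $i$ post-swap; without this slack, the potential argument would fail precisely in the overlapping case.
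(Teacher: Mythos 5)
Your proof is correct and takes a genuinely different route from the paper. The paper's argument uses a \emph{stochastic} potential, namely the number of individuals living in firewalls: this quantity is monotone nondecreasing (because firewalls are permanent), has positive probability of strictly increasing whenever the configuration is neither frozen nor firewall-free, and is bounded by $n$; combined with the observation that a random initial labeling contains a firewall with probability $1-o(1)$, one concludes that with probability $1-o(1)$ the process reaches a frozen state. Your argument replaces this with a \emph{deterministic} Lyapunov function: the close-pair agreement count $\Psi$ strictly increases by at least $2$ on every valid swap, no matter what the configuration looks like, so at most $nw/2$ valid swaps ever occur along any trajectory. The overlap correction $-2\cdot\mathbf{1}[d(i,j)\leq w]$ is bookkept correctly (the pair $\{i,j\}$ is counted once in each of the two per-endpoint sums but in fact contributes $0$ both before and after), and you correctly identify that the slack $s_{T_B}^i\geq w+1$ coming from $A$'s strict pre-swap unhappiness --- not merely from $B$'s post-swap happiness --- is what makes $\Delta\Psi\geq 2$ in the overlapping case. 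Your approach gives the stronger conclusion that freezing happens with probability exactly $1$ for every $n>2w$ and every initial labeling, whereas the paper only gets $1-o(1)$ because it needs an initial firewall to exist; the paper's version is nevertheless the natural one in context, since the firewall potential is exactly the structure that drives the rest of their analysis, while your $\Psi$ is self-contained but says nothing about the geometry of the frozen state. One small wording correction: the number of \emph{valid swaps} is bounded by $nw/2$, but the number of proposal \emph{time steps} is not (wasted proposals can occur arbitrarily often); the ``with probability $1$'' is doing real work to dispose of that tail, so ``bounded time'' should read ``bounded number of valid swaps.''
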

\begin{proof}
A potential function that verifies this fact is the number of 
individuals belonging to firewalls.  Let $S_0(t)$ denote the set of 
individuals belonging to firewalls at time $t$, and let $S_1(t)$ denote
its complement.  As long as $S_0(t)$ and $S_1(t)$ are both nonempty, 
there will be an individual $a \in S_1(t)$ neighboring an individual 
$b \in S_0(t)$.  These two must be oppositely labeled, as otherwise $a$
would belong to the same firewall as $b$.  Individual $a$ must
be unhappy: assuming w.l.o.g.\ that $b$ lives to the right of 
$a$ and that the label of $a$ is $x$, then all of $a$'s neighbors 
on the right are labeled $o$, and at least one of $a$'s neighbors
on the left is labeled $o$ (as otherwise $a$ itself would belong
to an $x$-firewall), and hence $a$ is unhappy.  If there are
unhappy individuals of both labels, then there is a positive probability
that $a$ will swap with an oppositely labeled unhappy individual
and that individual will then join a firewall, increasing the
potential function.  If all of the unhappy individuals are labeled
$x$, then all of the $o$-type individuals are already living in
firewalls and the configuration is already frozen.  

Thus, we have defined an integer-value potential function, taking
values between $0$ and $n$, that always has positive probability
of increasing unless either the configuration is already frozen,
or the initial configuration contained no firewalls.  Finally,
the probability that the initial configuration contains no 
firewalls is certainly $o(1)$: partition the ring into $n/(w+1)$
blocks, each of which independently has probability 
$2^{-w}$ of being a firewall in the initial configuration.
The probability that none of them are firewalls is 
$(1-2^{-w})^{-n/(w+1)}$, which is $o(1)$ as $n \to \infty$.
\end{proof}

Firewalls are therefore stable, segregated configurations which 
guarantee the eventual termination of the process.

\section{Analysis}
\label{sec:analysis}
In this section we prove bounds on the run-length distribution
of the segregation model.  
The main idea behind our analysis is to show that 
firewalls 
 occur fairly frequently.  We will show that
for any site on the ring, with high probability, the process
will eventually form firewalls of both colors on both sides of 
the site, within its $\mathrm{poly}(w)$ nearest neighbors.  To do so,
we define a type of configuration that we call a \emph{firewall
incubator} and we show that it occurs reasonably frequently in
a random 2-coloring of the ring: every site has probability $\Omega(1)$
of belonging to a firewall incubator in the initial
configuration.  The main part of the proof is devoted to showing
that firewall incubators are reasonably likely to develop into
firewalls; the probability is at least $1/\mathrm{poly}(w)$.  
To prove this, it is easier to analyze a related stochastic
process in which one step consists of choosing
a single site and reversing its color if it is unhappy.
The comparison with this process is justified if the 
ratio of unhappy $x$'s to unhappy $o$'s is equal to
$1 \pm o(1)$ throughout the time interval of interest;
we prove that this is so, with high probability, by an application
of Wormald's differential equation technique.

\subsection{Bounding average run length}
\label{subsec:runlength}
\newcommand{\sat}[1]{t^*_#1}
\newcommand{\sato}{t^*}
\newcommand{\sattm}{satisfaction time\xspace}
\newcommand{\sumseq}{transcript\xspace}
\newcommand{\psumseq}{pseudo-transcript\xspace}

As discussed above, to analyze the formation of firewalls
we will first define  a structure
called a \emph{firewall incubator} that has a reasonable probability
of becoming a firewall in the long run.  An incubator is a region with substantially more $x$ sites than $o$ sites (or vice versa).  In such regions, the minority individuals are unhappy and will continue to move out unless nearby neighborhoods have developed an opposite bias.  Using a random walk analysis, we will argue that it is reasonably likely that all the minority individuals move out before this happens, and so the region turns into a firewall.

\paragraph{The Birth of an Incubator}  
A firewall incubator, defined formally in Definition~\ref{def:incubator} below, consists of a sequence of blocks: two \emph{defender} blocks flanking an \emph{internal} block.
The blocks of size $w$ on either side of the firewall incubator are called \emph{attacker} blocks, and they play a key role in our analysis.  Defender and internal blocks are the regions that will potentially become firewalls.  The attacker blocks are the nearby neighborhoods that might impede the process of the defenders becoming firewalls.  The internal blocks are also biased in the same direction as the defender blocks and so help guarantee that the defender is not attacked from both sides.  

To specify our exact requirements on the biases, we associate a sign, $+1$ or $-1$, with $x$ and $o$, respectively, and define the \emph{x-bias} $\beta_t(i)$ of a node $i$ at time $t$ to be the sum of the signs of the $w$-closest nodes on either side of the element and the sign of the element itself. 
We will write $\beta(i)$ when the time is clear from context.
The $x$-bias directly expresses the element's happiness: an $x$-element $i$ is happy if and only if $\beta(i)>0$ and an $o$-element $j$ is happy and only if $\beta(j)<0$.\footnote{Since $\beta(i)$ is the sum of $2w+1$ labels, it is always odd, and so one of these strict inequalities must hold.}  
\begin{definition}
\label{def:incubator}
A \emph{firewall incubator} is a block $F$ made up of three consecutive
blocks $D_L,I,D_R$ 
(called left defender, internal, and right defender, respectively)
such that:
\begin{enumerate*}
\item $D_L$ and $D_R$ have exactly $w+1$ nodes;
\item $\beta_0(i) > \sqrt{w}$ for all $i \in F$;
\item The minimum $x$-bias in $D_L$ occurs at its left endpoint, and the
minimum $x$-bias in $D_R$ occurs at its right endpoints.
\end{enumerate*}
The blocks of length $w$ immediately to the left and right of $F$ are
denoted by $A_L, A_R$ and are called the left and right attackers.
\end{definition}
For an example of an incubator, see Figure~\ref{fig:incubator}.
\begin{figure}[t]
  \centering
\includegraphics[width=4.5in]{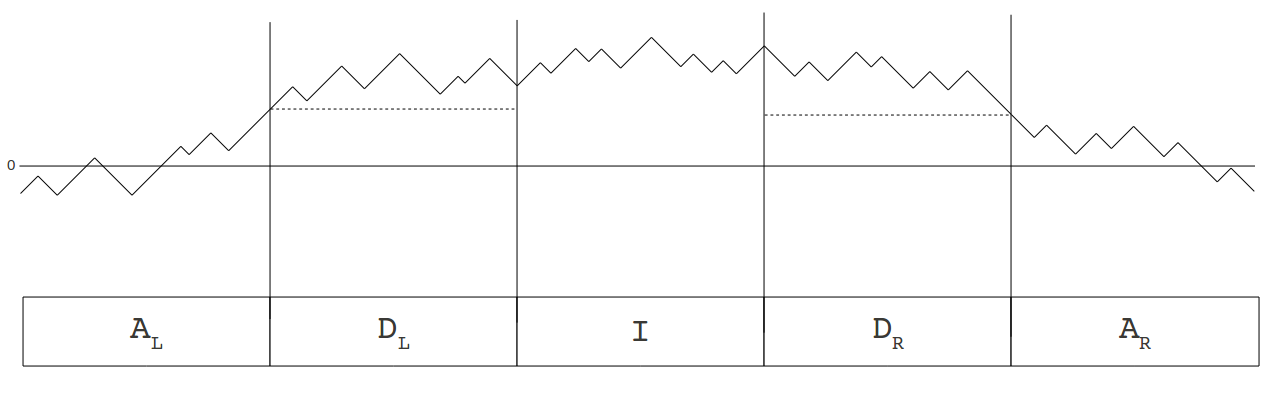}
  \caption{A firewall incubator, surrounded by left and right attacking blocks. Height indicates $x$-bias of a position.}
\label{fig:incubator}
\end{figure}
Firewall incubators occur fairly frequently in the initial configuration.
In fact, Proposition~\ref{prop:incubators-happen} below 
proves that every block of size $6w$ in the initial configuration has 
probability $\Omega(1)$ of containing a firewall incubator.
The proof consists of partitioning the block (along with its neighboring
attacker blocks) into sub-blocks of
length $w$ and showing that the 
property $\forall i \; \beta_0(i)>\sqrt{w}$ is implied by 
conditions on the partial sums of the sign sequence in
each sub-block.  These conditions are verified to hold 
with constant probability, using the reflection principle
and the central limit theorem.  Finally, the proof shows
that by choosing the left and right endpoints of the 
incubator to be the nodes with minimum bias in the leftmost
(resp.\ rightmost) sub-block, with constant probability
these nodes also have the minimum bias in $D_L, D_R$,
respectively.

\begin{prop}
\label{prop:incubators-happen}
Let $r$ be an integer such that $6 \leq r < \frac{n}{w} - 2$.
For any sequence of 
$rw$ consecutive nodes, the probability that a 
uniformly random $\{x,o\}$-labeling of the nodes 
contains an $x$-firewall incubator that
starts among the leftmost $w$ nodes and ends among
the rightmost $w$ nodes is at least $c^r$, where 
$c > 0$ is a constant independent of $r,w,n$.
\end{prop}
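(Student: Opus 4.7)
The plan is to partition the given sequence of $rw$ nodes, together with the $w$-wide attacker blocks flanking it, into $r+2$ consecutive sub-blocks $B_0,B_1,\ldots,B_{r+1}$, each of length $w$. Writing $a_i\in\{\pm 1\}$ for the sign at position $i$, $S_j:=\sum_{i\in B_j}a_i$, and $L_j(k)$ for the sum of the first $k$ signs in $B_j$, I will impose on each sub-block the local event
\[
E_j\;=\;\bigl\{S_j\in[3\sqrt{w},4\sqrt{w}]\bigr\}\,\cap\,\bigl\{\,|L_j(k)-k S_j/w|\le \sqrt{w}\text{ for all }0\le k\le w\,\bigr\}.
\]
Because the signs are i.i.d., the $E_j$ are mutually independent; each has probability bounded below by an absolute constant $p_0>0$, with the central limit theorem handling the endpoint constraint and a standard reflection-principle / Brownian-bridge estimate supplying $\Omega(1)$ probability for the ``bridge'' constraint conditional on the endpoint. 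Hence $\Pr\!\bigl[\bigcap_{j=0}^{r+1}E_j\bigr]\ge p_0^{r+2}$.

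Next I will verify property (ii) of a firewall incubator. For $i$ at offset $k$ in $B_j$ with $1\le j\le r$, one has $\beta_0(i)=S_{j-1}+S_j-L_{j-1}(k-1)+L_{j+1}(k)$; plugging in the bounds from $E_{j-1},E_j,E_{j+1}$, and noting that the coefficients of $S_{j-1}$ and $S_{j+1}$ are non-negative and sum to $1+1/w$, yields $\beta_0(i)\ge 3\sqrt{w}\,(2+1/w)-2\sqrt{w}\ge 4\sqrt{w}$, comfortably more than $\sqrt{w}$.

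The main obstacle is enforcing property (iii). I will set $s:=\arg\min_{i\in B_1}\beta_0(i)$ and $e:=\arg\min_{i\in B_r}\beta_0(i)$, and declare $D_L=[s,s+w]$, $D_R=[e-w,e]$, $I=[s+w+1,e-w-1]$. The portion of $D_L$ inside $B_1$ attains its minimum at $s$ by construction, so it suffices to show $\min_{B_1}\beta_0\le\min_{B_2}\beta_0$ (and symmetrically on the right). The key point is that $E_j$ was designed to be invariant under reversing the order of the signs inside $B_j$: reversal sends $L_j(k)$ to $S_j-L_j(w-k)$, so the bridge deviation $L_j(k)-kS_j/w$ becomes its negated reversal and the event $\{|L_j(k)-kS_j/w|\le\sqrt{w}\}$ is preserved. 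It follows that the conditional law of $(B_0,B_1,B_2,B_3)$ given $E_0\cap E_1\cap E_2\cap E_3$ is invariant under the reflection $\phi:i\mapsto 2w+1-i$. A short computation shows that in the reflected configuration $\beta_0^{\phi}(i)=\beta_0(2w+1-i)$, so $\phi$ exchanges $\min_{B_1}\beta_0$ with $\min_{B_2}\beta_0$; exchangeability then yields $\Pr[\min_{B_1}\beta_0\le\min_{B_2}\beta_0\mid\bigcap_j E_j]\ge\tfrac12$. The analogous argument on the right touches only $B_{r-2},\ldots,B_{r+1}$, which is disjoint from $B_0,\ldots,B_3$ precisely because $r\ge 6$, so the two exchangeability events are conditionally independent. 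Multiplying everything: $\Pr[\text{firewall incubator exists}]\ge\tfrac14 p_0^{r+2}\ge c^r$ for a suitable constant $c>0$, as claimed.
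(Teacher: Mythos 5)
Your proposal is correct and follows the same strategy as the paper's proof: partition the $rw$-node segment together with its two flanking width-$w$ attacker blocks into $r+2$ length-$w$ sub-blocks, impose a per-block event of constant probability that guarantees large bias in the interior and is invariant under internal reversal, and then use that reversal symmetry to control the location of the bias minima near both ends. The only differences are cosmetic: the paper's per-block event (``$x$-promoting'') uses one-sided bounds on the forward and backward partial sums ($\chi(B)\ge 5\sqrt{w}$, $\chi_j(B),\chi_{-j}(B)>-2\sqrt{w}$) whereas yours uses a symmetric bridge-deviation bound with a window on the endpoint; and the paper phrases the endpoint-minimum argument via the four equiprobable labelings $\lambda_{00},\lambda_{01},\lambda_{10},\lambda_{11}$ obtained by optionally reversing the first and last $4w$ nodes, whereas you phrase it as conditional exchangeability of $(\min_{B_1}\beta_0,\min_{B_2}\beta_0)$ under the reflection; both give the same factor of $\tfrac14$ and rely on $r\ge 6$ in the same way for independence of the left and right ends.
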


\newcommand{\labeling}{\lambda}

To prove the theorem, we start by defining some notation.
As above, we associate a value of $+1$ to a node labeled
with $x$ and $-1$ to a node labeled with $o$.  If $B$ is 
any sequence in $\{x,o\}^k$, we use 
$\ccount_j(B)$ to denote the sum of 
the first $j$ associated signs, for $0 \leq j \leq k$. 
We also use $\ccount(B) = \ccount_k(B)$ to denote the sum of all signs
in $B$, and $\ccount_{-j}(B) = \ccount(B) - \ccount_{k-j}(B)$ to denote 
the sum of the final $j$ signs.

\begin{definition} \label{def:promoting}
A sequence $B \in \{x,o\}^{w}$ is \emph{$x$-promoting}
if $\ccount(B) \geq 5 \sqrt{w}$ and for all $j = 1,\ldots,w$,
$\ccount_j(B) > - 2 \sqrt{w}$ and $\ccount_{-j}(B) > - 2 \sqrt{w}$.
\end{definition}
\begin{lem} \label{lem:promoting-1}
The probability that a uniformly random sequence $B \in \{x,o\}^{w}$ 
is $x$-promoting is $\Omega(1)$.
\end{lem}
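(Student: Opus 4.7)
The plan is to view $B$ as the trace of a simple symmetric random walk. Encoding $x \mapsto +1$ and $o \mapsto -1$ as in the main text, set $S_0 = 0$ and $S_k = \chi_k(B)$ for $k = 1, \ldots, w$. Using the identity $\chi_{-j}(B) = S_w - S_{w-j}$, the three conditions in Definition~\ref{def:promoting} become (a)~$S_w \geq 5\sqrt{w}$; (b)~$\min_{0 \leq k \leq w} S_k > -2\sqrt{w}$; and (c)~$\max_{0 \leq k \leq w} S_k < S_w + 2\sqrt{w}$. I need to show the joint probability of (a)--(c) is bounded below by a positive constant independent of $w$; finitely many small values of $w$ are handled by direct inspection.

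I would invoke Donsker's invariance principle: the linearly interpolated rescaled trajectory $W_w \in C[0,1]$ defined by $W_w(k/w) := S_k/\sqrt{w}$ converges in distribution (sup-norm topology) to a standard Brownian motion $Z$ on $[0,1]$. Conditions (a)--(c) are implied by the open event $E := \{Z_1 > 5\} \cap \{\min_{t \in [0,1]} Z_t > -2\} \cap \{\max_{t \in [0,1]} Z_t < Z_1 + 2\}$ in $C[0,1]$, so by the portmanteau theorem, $\liminf_{w \to \infty} \Pr[B \text{ is } x\text{-promoting}] \geq \liminf_{w \to \infty} \Pr[W_w \in E] \geq \Pr[Z \in E]$. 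To show $\Pr[Z \in E] > 0$, I exhibit a deterministic path in $E$ with slack: the linear ramp $\gamma(t) = 5.5\,t$ satisfies $\gamma(1) = 5.5$, $\min \gamma = 0$, $\max \gamma = 5.5$, all at distance at least $1/2$ from the boundary of $E$. The open sup-norm ball of radius $1/2$ around $\gamma$ therefore lies in $E$, and since Wiener measure has full topological support on $C_0[0,1]$, this ball has strictly positive measure.

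The principal subtlety is condition (c), which couples the running maximum of the walk with its endpoint rather than constraining a single coordinate functional; however, the map $f \mapsto \max_{t \in [0,1]} f(t) - f(1)$ is continuous on $C[0,1]$, so the sup-norm framing absorbs it cleanly. A self-contained alternative avoiding Donsker would split the walk at $k = w/2$: Stirling's formula shows $S_{w/2}$ lies in an interval of length $\Theta(\sqrt{w})$ centered near $3\sqrt{w}$ with probability $\Omega(1)$, and on each half the reflection principle yields an $\Omega(1)$ conditional probability that the walk stays inside the required corridor given its endpoints. The two halves are independent given $S_{w/2}$, so the product is $\Omega(1)$, as claimed.
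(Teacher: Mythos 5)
Your proof is correct and takes a genuinely different route from the paper's. After rewriting $\chi_{-j}(B) = \chi(B) - \chi_{w-j}(B)$, you recast the $x$-promoting event as a ``stay in a corridor and exit high'' event for the partial-sum random walk and pass to the Brownian limit via Donsker's invariance principle; openness of $E$ in $C[0,1]$ and the portmanteau theorem give $\liminf_w \Pr(W_w \in E) \geq \Pr(Z \in E)$, and positivity of $\Pr(Z \in E)$ follows from full support of Wiener measure applied to the sup-norm ball of radius $\tfrac12$ around the ramp $\gamma(t)=5.5t$. The paper instead works directly in the discrete setting: it bounds $\Pr(\cale^L_k)$ by $w/k^2$ via the reflection principle together with a moment bound, bounds $\Pr(\chi(B) \geq 5\sqrt{w})$ below by the CLT, and then --- the technically nontrivial piece your argument gets for free --- shows via a monotone coupling (inequalities~\eqref{eq:promoting-1}--\eqref{eq:promoting-3}) that conditioning on a large total sum only decreases the probability of a low dip, so that the two bounds can be multiplied. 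Your route is conceptually cleaner because handling the intersection event directly in the Brownian limit sidesteps this coupling; the paper's route is more elementary and in principle extracts an explicit constant. Both arguments genuinely require $w$ large (the paper restricts to $w > 25$ so that $5\sqrt{w} \leq w$). Your clause that small $w$ are ``handled by direct inspection'' is not quite right --- for $w \leq 25$ the condition $\chi(B) \geq 5\sqrt{w}$ is impossible and the probability is exactly $0$ --- but both proofs share this implicit large-$w$ convention for the meaning of $\Omega(1)$, so the substance is unaffected. Your sketched discrete alternative (splitting the walk at $w/2$) is plausible but not fully worked out; in particular, the corridor constraint of condition (c) involves the not-yet-determined endpoint $S_w$, so the decomposition must condition on both $S_{w/2}$ and $S_w$ before applying the bridge-independence claim.
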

\begin{proof}
For $k>0$, let $\cale^L_k, \cale^R_{k}$ denote the events
\begin{align*}
\cale^L_k &= \{ \exists j \; \ccount_j(B) \leq -k \} \\
\cale^R_k &= \{ \exists j \; \ccount_{-j}(B) \leq -k \}
\end{align*}
The probabilities of these events can be calculated using
the reflection principle~\cite{feller}, which can be
phrased in this context as follows:
{\em if $B$ is a uniformly random element of
$\{x,o\}^{w}$ then for all $k>0$}
$$
\Pr(\cale^L_k) = 
\Pr(\ccount(B) \leq -k) + \Pr(\ccount(B) < -k).
$$
By symmetry, the right side is equal to 
$\Pr(\ccount(B) \leq -k) + \Pr(\ccount(B) > k)$.
Using Markov's inequality, and the fact that $\ccount(B)$ is a sum of 
$w$ independent random signs and hence
$\E[(\ccount(B))^2] = w$, we now obtain
\begin{align*}
\Pr(\cale^L_k) & \leq
\Pr((\ccount(B))^2 \geq k^2) \leq \frac{w}{k^2},
\end{align*}
for all $k > 0$.  In particular, the right side is
$1/4$ when $k = 2\sqrt{w}$. 

It is easy to see that whenever $a<b$ are two numbers
such that the events $\ccount(B) = a$ and $\ccount(B)=b$
have positive probability, the inequality
\begin{equation} \label{eq:promoting-1}
\Pr(\cale^L_k \mid \ccount(B)=a) \geq 
\Pr(\cale^L_k \mid \ccount(B)=b)
\end{equation}
holds.  One way to see this is to observe that we can
obtain a random
sample from the conditional distribution of $B$ given
$\ccount(B)=a$ by the following procedure: first draw
a random sample from the conditional distribution of $B$
given $\ccount(B)=b$, then select a uniformly random 
set of $\frac{b-a}{2}$ occurrences of $x$ in the sequence 
and change each of them to $o$.  The second stage of the 
sampling does not increase any of the partial sums
$\ccount_j(B)$, so if $\cale^L_k$ held in the first stage
of the sampling, then it continues to hold after the second
stage.
Inequality~\eqref{eq:promoting-1} implies that for any $b$
such that the events $\ccount(B)<b, \ccount(B) \geq b$ both
have positive probability,
\begin{equation} \label{eq:promoting-2}
\Pr(\cale^L_k \mid \ccount(B) < b) \geq
\Pr(\cale^L_k \mid \ccount(B) \geq b).
\end{equation}
Since the unconditional probability of $\cale^L_k$ is 
a weighted average of the left and right sides, we obtain
\begin{equation} \label{eq:promoting-3}
\Pr(\cale^L_k \mid \ccount(B) \geq b) \leq
\Pr(\cale^L_k) \leq \frac{w}{k^2},
\end{equation}
for all $b,k$.  By symmetry, $$\Pr(\cale^R_k \mid \ccount(B) \geq b) =
\Pr(\cale^L_k \mid \ccount(B) \geq b).$$
By the Central Limit Theorem,
$$
\lim_{w \to \infty} \Pr(\ccount(B) \geq 5 \sqrt{w}) = 
\frac{1}{\sqrt{2 \pi}} \int_5^{\infty} e^{-x^2/2} \, dx
$$
and therefore there is an absolute constant $c_0$ such
that $\Pr(\ccount(B) \geq 5 \sqrt{w}) > 2 c_0$ for all 
$w > 25$.  (The restriction to $w>25$ is necessary so that
$w \geq 5 \sqrt{w}$.)  Now we find that for $b=
5 \sqrt{w}$ and $k=2 \sqrt{w}$,
\begin{align*}
\Pr(\mbox{$B$ is $x$-promoting}) & =
\Pr \left( \ccount(B) \geq b \, \wedge \,
\overline{\cale^L_k} \, \wedge \,
\overline{\cale^R_k} \right) \\
& \geq
\Pr(\ccount(B) \geq b) \, \cdot \,
\left[ 
1 - 2 \Pr(\cale^L_k \mid \ccount(B) \geq b) \right] \\
& \geq
2 c_0 \left[ 1 - 2 \left( \frac{w}{k^2} \right) \right] = c_0.
\end{align*}
\end{proof}

Now we proceed to the proof of Proposition~\ref{prop:incubators-happen}.
\begin{proof}[Proof of Proposition~\ref{prop:incubators-happen}]
Given a sequence of $rw$ consecutive nodes, for 
$6 \leq r \leq \frac{n}{w}-2$, partition it into
$r$ blocks $B_1,B_2,\ldots,B_r$ each containing $w$
consecutive nodes.  Let $B_0, B_{r+1}$ denote the blocks
of $w$ nodes immediately preceding $B_1$ and 
immediately following $B_r$, respectively.  
If $\labeling = \labeling_{00}$ is any labeling of 
the nodes in $B_0,B_1,\ldots,B_{r+1}$, then let
$\labeling_{01},\labeling_{10},\labeling_{11}$ respectively
denote the labelings obtained from $\labeling$ by reversing
the ordering of the labels of the first $4w$ nodes, the last
$4w$ nodes, or both sets of nodes.

With
probability at least $c_0^{r+2}$, the 
$r+2$ blocks that constitute $\labeling_{00}$ are all
$x$-promoting.  The reverse of an $x$-promoting
sequence is also $x$-promoting, so when this event
happens it also happens that $\labeling_{01},\labeling_{10},
\labeling_{11}$ are also made up entirely of $x$-promoting
blocks.  Furthermore, by symmetry, all labelings in the
set
$\{\labeling_{00},\labeling_{01},\labeling_{10},\labeling_{11}\}$
are equiprobable given this event.  If we can show that
at least one of these four labelings has an 
$x$-firewall incubator that starts in $B_1$ and
ends in $B_r$ then we will have shown that the 
probability of such an incubator existing is at
least $\frac14 c_0^{r+2}$, thus establishing the lemma.

Recall the bias of a node, $\beta(i)$, defined as the 
sum of the $2w+1$ signs associated to the nodes within
distance $w$ of $i$, including $i$ itself.  Note that
if $i$ belongs to the middle block in a sequence of 
three consecutive $x$-promoting blocks, then $\beta_0(i) > \sqrt{w}$.
In fact, letting $B,B',B''$ denote the three blocks and letting
$j$ denote the position of $i$ within $B'$, we have
$$
\beta_0(i) = \ccount(B') + \ccount_{-(j-1)}(B) + \ccount_j(B'') 
> 5 \sqrt{w} - 2 \sqrt{w} - 2 \sqrt{w}
$$
by the definition of an $x$-promoting block.
In particular, our assumption that all of the
blocks constituting the labelings
$\{\labeling_{00},\labeling_{01},\labeling_{10},\labeling_{11}\}$
are $x$-promoting implies that every node
in the blocks $B_1,\ldots,B_r$ has bias greater than $\sqrt{w}$ in
all four of the labelings.  

When $i$ belongs to $B_1$ or $B_2$, all of the nodes 
within distance $w$ of $i$ belong to $B_0 \cup B_1 \cup B_2 \cup B_3$.
Thus, when we reverse the ordering of labels of those $4w$ nodes,
the resulting sequence of biases in $B_1 \cup B_2$ is 
also reversed.  In particular, we can ensure that the set 
$M_1 = \arg \min \{ \beta_0(i) \mid i \in B_1 \cup B_2 \}$
intersects $B_1$ by retaining the labeling of $B_0,\ldots,B_3$
as in $\lambda_{00}$ or taking the reverse ordering.  Similarly, 
we can ensure that the set
$M_r = \arg \min \{ \beta_0(i) \mid i \in B_{r-1} \cup B_r \}$
intersects $B_r$ by retaining the labeling of $B_{r-2},\ldots,B_{r+1}$
or taking the reverse ordering.  It follows that in at least one of the labelings
$\{\labeling_{00},\labeling_{01},\labeling_{10},\labeling_{11}\}$,
the sets $M_1 \cap B_1$ and $M_r \cap B_r$ are both nonempty.
When this happens, by construction, any sequence of nodes
starting in $M_1 \cap B_1$ and ending in $M_r \cap B_r$ is 
an $x$-firewall incubator.
\end{proof}



\paragraph{The Lifecycle of an Incubator}  We focus on the interaction between the attacker and defender blocks on one side of a firewall incubator.  At every point in time, some swap of two nodes is proposed.  This swap contributes to the construction of a firewall in the defender if it involves an $o$ moving out of the defender, and hinders the construction if it involves an $x$ moving out of the attacker.  We need to show that good moves ($o$'s moving from the defender) happen sufficiently frequently and early in the process.  For the remainder of this section, we focus on moves in left attacker/defenders; similar statements hold for right attacker/defenders.  

More formally, we introduce a notion called {\it satisfaction time} which indicates the first time at which an element is selected for a move.
\begin{definition}
\label{def:sattm}
The {\emph{\sattm}} of a node $i$, denoted by
$\sat{i}$, is defined to be the first time when 
$i$ is selected to participate in a proposed swap 
with an unhappy, oppositely labeled individual. 
(If no such time exists, then $\sat{i}=\infty$.) 
A node $i$ is called \emph{impatient} at time $t$
if it is unhappy and $t \leq \sat{i}$.
\end{definition}
\noindent
Note that the element at $i$ may not actually participate in a swap at time $\sat{i}$, since it will only participate in a swap if it is unhappy.  In particular, an attacking $x$-element $i \in A_L$ may swap at its satisfaction time, swap at a later point, or not swap at any point.   However, a defending $o$-element $i\in D_L$ is guaranteed to swap at its satisfaction time if its bias and those of all its neighbors in the defender and internal blocks remain positive up until and including its own satisfaction time.  The initial bias of any such $i$ is, by the definition of an incubator, at least $\sqrt{w}$, and so the guarantee holds so long as enough attacking $x$-elements remain when the $o$-element's satisfaction time is reached.  To capture this intuition mathematically, 
we make the following definitions.
\begin{definition}
\label{def:combatants}
For a firewall incubator $F = D_L \cup I \cup D_R$ with corresponding attackers
$A_L,A_R$, a \emph{left attacking $x$} is an individual of type $x$ who belongs to $A_L$ in the initial configuration, and a \emph{left defending $o$} is an individual of type $o$ who belongs to $D_L$ in the initial configuration.  A \emph{left combatant} is an individual that is either a left attacking $x$ or a left defending $o$.  The equivalent terms with ``right'' in place of ``left'' are defined similarly; henceforth when referring to combatants we will omit ``left'' and ``right'' when they can be inferred from context.  The number of 
left attacking $x$'s and left defending $o$'s are denoted by $a_L,d_L$,
and for the right combatants we define $a_R,d_R$ similarly.
\end{definition}
\begin{definition}
The \emph{left-\sumseq} (resp. \emph{right-\sumseq}) 
is the sign sequence obtained by listing all of 
the left (resp. right) 
combatants in reverse order of satisfaction time, and translating
each attacking $x$ in this list to $+1$ and each defending $o$ to $-1$.

If there exists a time $t_0$ at which 
no individuals in $F$ are impatient,
any sign sequence obtained from the left-\sumseq (resp. right-\sumseq)
by permuting the signs
associated to individuals whose \sattm is after $t_0$, while fixing
all other signs in the \sumseq, is called a \emph{left-\psumseq}
(resp. \emph{right-\psumseq}).
\end{definition}
The relevance of \psumseq{}s will only become clear much later,
when we prove Proposition~\ref{prop:partialsums}.  They constitute
a relaxation of \sumseq{}s that encode almost all of the relevant
information in the \sumseq{}s --- since they differ from \sumseq{}s
only by a permutation that is, in some sense, irrelevant --- yet 
they turn out to be easier to work with probabilistically.

Define the $k^{\mathrm{th}}$ partial sum of a sequence to be the sum of its first $k$ elements. 
Our result follows from the following two main propositions.  
\begin{prop}
\label{prop:firewallforms}
Suppose that $F$ is a firewall incubator and
there exist left- and right-\psumseq{}s such that all partial sums 
of both \psumseq{}s are non-negative.  Then $F$ becomes an $x$-firewall.
\end{prop}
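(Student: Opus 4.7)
The plan is to prove that $D_L$ eventually becomes an all-$x$ block, i.e., an $x$-firewall of length $w+1$; by the symmetric argument applied using the right-\psumseq hypothesis, $D_R$ becomes an $x$-firewall as well, so $F$ contains an $x$-firewall. The main invariant I will maintain, by induction on time, is that the $x$-bias at every position of $D_L$ is at least $1$ at all times $t \leq t_0$, where $t_0$ denotes the earliest time at which no individual of $F$ is impatient. Once this invariant is in place, every defending $o$ in $D_L$ is still unhappy at its \sattm, and by the swap rule it must then swap with its (necessarily unhappy) opposite-label partner; consequently $D_L$ is devoid of defending $o$'s by time $t_0$. Since any $x$ that occupies a defender position along the way sits in a region of positive bias and therefore remains happy, no $o$ can re-enter $D_L$, so $D_L$ has become an $x$-firewall by time $t_0$.

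The central computation is at the leftmost node $i_0 \in D_L$, whose window $[i_0-w,\,i_0+w]$ covers exactly $A_L \cup D_L$. A direct count yields
\begin{equation*}
\beta_0(i_0) \;=\; \bigl(2 a_L - w\bigr) + \bigl(w + 1 - 2 d_L\bigr) \;=\; 2(a_L - d_L) + 1.
\end{equation*}
Only swaps that move a combatant out of $A_L \cup D_L$ can alter $\beta_t(i_0)$: each departing attacker decreases it by $2$, each departing defender increases it by $2$. Writing $N_x^*(t)$ for the number of attackers with \sattm at most $t$ (an upper bound on the number that have actually left $A_L$ by time $t$) and $N_o(t)$ for the number of defenders that have actually left $D_L$ by time $t$, we obtain
\begin{equation*}
\beta_t(i_0) \;\geq\; \beta_0(i_0) \;-\; 2\bigl(N_x^*(t) - N_o(t)\bigr).
\end{equation*}
The hypothesis that some left-\psumseq has non-negative partial sums translates, after the standard rewrite of the partial-sum inequality, into $N_x^*(t) - N_o(t) \leq a_L - d_L$ for every $t \leq t_0$ (invoking the inductive assumption that every defender with \sattm before $t$ has actually swapped out by time $t$). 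Substituting gives $\beta_t(i_0) \geq 1$, which confirms the invariant at $i_0$.

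For interior positions $i \in D_L$, only a subset of combatants lies within $[i-w,\,i+w]$, so the bookkeeping is more delicate. The incubator property that $\beta_0$ attains its minimum over $D_L$ at $i_0$, combined with the observation that attackers lying outside $i$'s window cannot harm $\beta_t(i)$, shows that the analysis at $i_0$ effectively covers the worst case. For positions near the right end of $D_L$ whose window extends into $I$, I would use that biases throughout $I$ remain positive; this follows from the symmetric argument applied to $D_R$ using the right-\psumseq hypothesis together with the incubator's guarantee that $\beta_0(\cdot) > \sqrt{w}$ inside $I$. The \psumseq relaxation itself is benign: the partial-sum inequality is only invoked up to time $t_0$, and permuting signs of combatants with \sattm after $t_0$ has no effect on that portion of the argument, since by time $t_0$ the block $D_L$ is already monochromatic and hence stable for the remainder of the process.

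The main obstacle is the interior-position bookkeeping and the coupling between $D_L$, $I$, and $D_R$: the tidy identity $A_L \cup D_L = [i_0-w,\,i_0+w]$ at $i_0$ is a lucky coincidence, and away from it one has to trace exactly which portion of the \psumseq controls the bias at a given site, and propagate the invariant across $I$ via the symmetric argument on the right.
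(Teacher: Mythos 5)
Your bias accounting at $i_0$ is exactly the paper's: $\beta_0(i_0)=2(a_L-d_L)+1$, each departed attacker costs $2$, each departed defender adds $2$, and the partial-sum hypothesis is what bounds $N_x^*(t)-N_o(t)$ by $a_L-d_L$. The paper frames the same calculation as a proof by contradiction (take the \emph{first} node in $F$ to develop negative bias and show the accounting makes that impossible), which is logically equivalent to your time-induction once stated carefully.

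Where your sketch goes astray --- and you half-notice it in your last paragraph --- is in how you propose to handle $I$. You want to maintain an invariant on $D_L$ alone, then patch up the right-hand end of $D_L$'s windows by asserting that ``biases throughout $I$ remain positive \ldots follows from the symmetric argument applied to $D_R$.'' That attribution is wrong, and pursuing it would lead to a circular argument: the $D_R$-transcript argument establishes positivity \emph{in $D_R$}, not in $I$, and it in turn relies on knowing what is happening in $I$ and $D_L$, just as your $D_L$ argument does. The decoupling the paper actually uses is simpler and local: every node of $I$ has its entire window inside $F=D_L\cup I\cup D_R$, so while all biases in $F$ remain positive the only swaps touching $F$ are $o$'s leaving (and possibly $x$'s entering), each of which can only \emph{increase} biases in $I$. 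Hence a node of $I$ can never be the first one in $F$ to go negative; the first violator, if any, must lie in $D_L$ or $D_R$, and your $i_0$ accounting (and its mirror) rules that out. Stating the invariant over all of $F$ rather than over $D_L$ in isolation is what removes the apparent circularity.

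Two smaller points. First, your conclusion ``so $F$ contains an $x$-firewall'' is weaker than the claim ``$F$ becomes an $x$-firewall'': you should also argue that every $o$ in $I$ departs by $t_0$, which follows by the same reasoning once positivity in $I$ is in hand (any $o$ remaining in $I$ at $t_0$ is not impatient, hence its node's \sattm{} already passed while its bias was positive, hence it swapped out; and no $o$ can re-enter a positive-bias site). Second, your invariant ``bias $\geq 1$'' cannot be propagated one step at a time --- a single attacker departure moves the bias by $2$ and could push $1$ down to $-1$ --- so you must phrase it as the paper does, via the global accounting identity together with the \psumseq{} hypothesis, rather than as a step-by-step induction.
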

\begin{proof}
The proof is by contradiction.  Let $t_0$ denote the earliest time at which
no individual in $F$ is impatient; such a $t_0$ exists by our hypothesis that
left- and right-\psumseq{}s exist.
If $F$ is not an $x$-firewall at time $t_0$, then some node $j \in F$ contains an individual of type $o$ at that time.  There must exist a time $t_1 \leq t_0$ at which $j$ is occupied by a \emph{happy} individual of type $o$.  The proof is by case analysis: $j$ is not impatient at time $t_0$, so either it is happy or $\sat{j} < t_0$.  If $j$ is happy, set $t_1=t_0$.  Otherwise, if the type-$o$ individual occupying node $j$ at time $t_0$ has never moved, then set $t_1 = \sat{j}$; the type-$o$ individual occupying $j$ must have been happy at time $t_1$ or else he would have moved at that time.  Finally, if the type-$o$ individual occupying node $j$ at time $t_0$ is not the original occupant, then let $t_1$ denote the time immediately after he moved to location $j$; by the definition of the swap rule, this means $j$ was occupied by a happy individual of type $o$ at time $t_1$.

Consider the first node in $F$ to develop a negative $x$-bias, and let $t$ denote the time when this happens.  Note that $t \leq t_1$ since node $j$ must have negative $x$-bias at time $t_1$ in order for the occupying type-$o$ node to be happy.  Up until time $t$, the biases of all nodes in $F$ are positive, and so the only swaps in $F$ involve $o$-elements moving out.  Such swaps can not decrease the $x$-bias of nearby nodes, and since the $x$-bias of nodes in $I$ is completely determined by the labels of nodes in $D_L\cup I\cup D_R = F$, we conclude that  the first node to develop a negative $x$-bias is not in $I$, but rather must be in $D_L$ or $D_R$.  Without loss of generality, suppose that it is node $i \in D_L$.  From the definition of a firewall incubator, the initial $x$-bias of $i$, $\beta_0(i)$, was bounded below by the $x$-bias of the leftmost node in $D_L$.  The set of neighbors of the leftmost node in $D_L$ (including the node itself) is $A_L \cup D_L$, so the $x$-bias of the leftmost node can be expressed in terms of the number of attacking $x$'s and defending $o$'s, $a_L$ and $d_L$, by the formula 
$$
a_L - (w-a_L) + (w+1-d_L) - d_L = 2(a_L-d_L) + 1.
$$
Hence, $\beta_0(i) \geq 2(a_L-d_L)+1$.
Up until time $t$, swaps involving elements in $I$ do not decrease the $x$-bias of $i$; we will ignore such swaps in the remainder of the proof.  In $D_L$, a swap happens before time $t$ if and only if it involves an $o$-element.  In particular, up until time $t$, whenever the satisfaction time of an $o$-element in $D_L$ is reached, it swaps out, increasing the $x$-bias of $i$ by $2$.  In the $A_L$ block, whenever an $x$-element swaps out, it decreases the $x$-bias of $i$ by $2$.  This happens in one of three ways:
\begin{enumerate}
\item An attacking $x$ (i.e. an $x$-element that was present in the initial state) swaps out at its satisfaction time (and before $t$).  This contributes $-2$ to the $x$-bias of node $i$.
\item An attacking $x$ swaps out after its satisfaction time (but still before $t$).  Again this contributes a $-2$ to the $x$-bias of node $i$.  
\item  An $x$-element swaps into the attacker, becomes unhappy, and later swaps out (all before time $t$).  In this case, the element contributes $+2$ to the $x$-bias of $i$ when it swaps in and $-2$ when it swaps out, so the total contribution at time $t$ is $0$.
\end{enumerate}
Let $a^t_L$ be the number of attacking $x$'s in $A_L$ whose satisfaction time is before $t$. The above shows that the decrement to the $x$-bias of $i$ due to swaps of elements in $A_L$ is at most $2a^t_L$. Similarly define $d^t_L$ to be the number of defending $o$'s in $D_L$ whose satisfaction time is before $t$.  Then we have that the $x$-bias of $i$ at time $t$ satisfies:
\begin{align}
\nonumber
\beta_t(i) & \geq \beta_0(i) + 2 d^t_L - 2 a^t_L \\
\nonumber
& \geq 2(a_L-d_L)+1+2d^t_L-2a^t_L \\
& > 2 \cdot [(a_L - a^t_L) - (d_L - d^t_L)].
\label{eqn:xbiasbound}
\end{align}
Recall that $a_L$ is the number of attacking $x$'s  and similarly $d_L$ is the number of defending $o$'s.  Thus $a_L-a^t_L$ is the number of attacking $x$'s whose satisfaction time is greater than $t$, and similarly $d_L-d^t_L$ is the number of defending $o$'s whose satisfaction time is greater than $t$.  Thus, the right side of~\eqref{eqn:xbiasbound} is twice the $k^{\mathrm{th}}$ partial sum of the left-\sumseq, where 
$k = (a_L-a^t_L) + (d_L-d^t_L)$
denotes the number of individuals whose satisfaction time is after $t$.
As $t$ is earlier than $t_0$, the earliest time at which no individuals 
in $F$ are impatient, any left-\psumseq differs from the left-\sumseq only
by permuting a subset of the first $k$ signs, and therefore has the same
$k^{\mathrm{th}}$ partial sum.  Now our assumption that the $x$-bias of $i$
becomes negative at $t$ contradicts the hypothesis that there exists a
left-\psumseq whose partial sums are all non-negative.
\end{proof}

The proof of the next Proposition occurs in Appendix~\ref{sec:partialsums}.
\begin{prop}
\label{prop:partialsums}
If $B$ is a random block of length $6w$, then with probability
$\Omega(1/w)$, $B$ contains a firewall incubator having left- and 
right-\psumseq{}s
whose partial sums are all non-negative.
\end{prop}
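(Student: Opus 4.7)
The plan is to combine Proposition~\ref{prop:incubators-happen}, which shows that a uniformly random $6w$-block contains a firewall incubator $F = D_L \cup I \cup D_R$ with probability bounded below by a constant, with a cycle/ballot-lemma estimate applied to the satisfaction-time order of the left and right combatants. Conditional on the existence of the incubator, we will show that left- and right-\psumseq{}s with all non-negative partial sums exist with probability $\Omega(1/w)$; multiplying by the constant incubator-existence probability yields the proposition. The key combinatorial input is already supplied by the incubator definition: the leftmost node of $D_L$ has $x$-bias at least $\sqrt w$, and this bias equals $2(a_L-d_L)+1$, so $a_L - d_L = \Omega(\sqrt w)$; likewise $a_R - d_R = \Omega(\sqrt w)$. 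In both cases $a+d \leq 2w+1$.

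If the satisfaction-time ordering of the $a_L+d_L$ left combatants were distributed as a uniformly random permutation of $a_L$ signs $+1$ and $d_L$ signs $-1$, the cycle lemma would give probability $\Omega\bigl((a_L-d_L)/(a_L+d_L)\bigr) = \Omega(1/\sqrt w)$ that all partial sums (read from the left, i.e.\ latest-satisfied first) are non-negative, and similarly on the right. Since the left combatants lie in $A_L \cup D_L$ and the right combatants lie in $A_R \cup D_R$ --- spatially separated by the internal block $I$ --- approximate independence of the two orderings would then multiply these bounds into the required $\Omega(1/w)$. The substantive task is to justify the approximate-uniformity and approximate-independence claims. The rate at which an attacking $x$ acquires its satisfaction time is proportional to the current number of globally unhappy $o$'s, while for a defending $o$ it is proportional to the current number of globally unhappy $x$'s; by the $\mathbb{Z}/2$-symmetry of the dynamics under exchanging $x$ and $o$, together with the fact (noted in the introduction and established by Wormald's differential-equation technique) that the ratio of these two populations stays within $1 \pm o(1)$ throughout the time window of interest, these two rates coincide up to lower-order terms, so combatants of both labels are selected in an essentially exchangeable order.

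The role of the \psumseq relaxation is precisely to absorb whatever discrepancy remains between the literal dynamics-induced order and a uniformly random one. I would take $t_0$ to be the first moment at which no element of $F$ is impatient; after $t_0$, every unsatisfied combatant must be happy, and by definition of \psumseq their signs may be permuted freely. Using the $\mathbb{Z}/2$-symmetry again, in the prefix of the \sumseq corresponding to $[0,t_0]$ attacking $x$'s and defending $o$'s are satisfied at essentially the same rate, so this fixed prefix is a balanced random walk whose partial sums stay within $O(\sqrt w)$ of zero and, with constant probability, never dip below zero while leaving the full surplus of $\Omega(\sqrt w)$ plus-signs intact for the permutable tail. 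Applying the cycle lemma to a uniformly random arrangement of the tail then completes the \psumseq to a non-negative sequence with probability $\Omega(1/\sqrt w)$ per side, and the separation-based independence gives $\Omega(1/w)$ overall. The main obstacle --- and the reason the proof is deferred to the appendix --- will be converting these approximate-exchangeability and independence heuristics into rigorous inequalities, via the coupling between the true Schelling dynamics and the simpler single-site color-flip process identified in the introduction as the key conceptual tool.
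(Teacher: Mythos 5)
Your proof identifies the same overall skeleton as the paper's: invoke Proposition~\ref{prop:incubators-happen} to obtain a firewall incubator with constant probability, read off $a_L - d_L = \Omega(\sqrt w)$ and $a_L + d_L = O(w)$ from the bias condition at the left endpoint of $D_L$, and invoke the Ballot Theorem (Lemma~\ref{lem:cyclic}) for the ordering of left and right combatants. But there are two substantive gaps. The first is the one you flag yourself: you have no rigorous mechanism for turning ``the unhappy $x$'s and $o$'s are balanced to within $1\pm o(1)$'' into ``the satisfaction-time order of the combatants is a uniformly random permutation.'' The paper's device here is a \emph{censorship} construction: at each time $t$, a carefully matched family of swap-pairs is declared censored so that every element of the padded incubator lies in equally many uncensored pairs regardless of its label (Lemma~\ref{lem:censorprop}, properties 1 and 2). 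Conditioned on the event that no censored swap is accepted before $t_0$, the combatant ordering is \emph{exactly} a uniformly random permutation --- not approximately --- so the Ballot Theorem applies verbatim, and this also gives the independence of the left and right transcripts for free; censored swaps are then shown to be rare using Theorem~\ref{thm:balance} together with property 3 of Lemma~\ref{lem:censorprop}. Without such a device, a per-step imbalance of $1\pm o(1)$ can accumulate over the $\Theta(w)$ combatant selections into a bias of the same order as the $\Omega(1/\sqrt w)$ Ballot probability you are trying to secure, and neither the $\mathbb{Z}/2$-symmetry heuristic nor spatial separation closes that.

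The second problem is your proposed split into a ``fixed balanced prefix for $[0,t_0]$'' that stays non-negative with constant probability plus a ``permutable tail'' that carries the surplus, with the Ballot Theorem applied only to the tail. Two things go wrong. The transcript is listed in \emph{reverse} order of satisfaction time, so the combatants with satisfaction time after $t_0$ --- the permutable ones --- occupy the \emph{beginning} of the sequence, and those satisfied in $[0,t_0]$ occupy the end; your prefix/tail assignment is reversed, and with it the logic of which part's partial sums are controlled by the Ballot argument. More importantly, a uniformly random arrangement of equal numbers of $\pm 1$'s of total length $\Theta(w)$ has all partial sums non-negative with probability only $\Theta(1/w)$ (a Catalan-number calculation), not $\Omega(1)$, so the claim that the balanced block stays non-negative with constant probability is false and the decomposition cannot deliver the stated bound even after correcting the direction. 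The paper avoids this entirely: once the censorship conditioning makes the whole scrambled transcript a uniformly random permutation of $a_L + d_L$ signs with surplus $a_L - d_L = \Omega(\sqrt w)$, a single application of Lemma~\ref{lem:cyclic} gives $\Omega(1/\sqrt w)$ per side, and independence of the two sides gives $\Omega(1/w)$.
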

The following gives the intuition behind the proof.  
First, we know from Proposition~\ref{prop:incubators-happen}
that with constant probability, $B$ contains a firewall
incubator $F$.  Let us focus on the 
left-\sumseq of $F$.  
First assume (unjustifiably) that the \sumseq is a uniformly 
random permutation of the $a_L$ $+1$'s and $d_L$ $-1$'s.  Then, by the 
following simple probabilistic lemma known as the {Ballot Theorem}, 
its partial sums are all non-negative with probability 
$(a_L-d_L)/(a_L+d_L)$ which, by the definition of a
firewall incubator, is at least $\Omega(\sqrt{1/w})$.

\begin{lem}[Ballot Theorem] \label{lem:cyclic}
Consider a multiset of consisting of $a$ copies
of $+1$ and $b$ copies of $-1$, and let 
$x_1,x_2,\ldots,x_{a+b}$ be a uniformly random
ordering of the elements of this multiset.
The probability that all partial sums
$x_1+\cdots+x_j \; (1 \leq j \leq a+b)$ are 
strictly positive is equal to $\max\{0,\tfrac{a-b}{a+b}\}$.
\end{lem}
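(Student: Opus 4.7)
The plan is to reduce to the cycle lemma (a.k.a.\ the Dvoretzky--Motzkin lemma) and couple a uniformly random ordering with a uniformly random cyclic shift. First, I would dispose of the degenerate case $a \leq b$ in one line: the full sum $x_1 + \cdots + x_{a+b} = a - b$ is then nonpositive, so the last partial sum cannot be strictly positive, and hence the desired probability is $0 = \max\{0,(a-b)/(a+b)\}$.

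For $a > b$, the key auxiliary fact I would establish is the following cycle lemma: \emph{for any fixed sequence $x_1,\ldots,x_{a+b}$ consisting of $a$ copies of $+1$ and $b$ copies of $-1$, exactly $a-b$ of its $a+b$ cyclic rotations $(x_{j+1},x_{j+2},\ldots,x_{j+a+b})$ (indices mod $a+b$) have all partial sums strictly positive.} Once this is in hand, the lemma follows by a short symmetry argument: if $\pi$ is a uniformly random ordering of the multiset and $J$ is independently uniform on $\{0,1,\ldots,a+b-1\}$, then the rotated sequence $\pi^{(J)}$ is again uniformly distributed on orderings of the multiset. Conditioning on $\pi$ and using the cycle lemma, the conditional probability that $\pi^{(J)}$ has all partial sums strictly positive is exactly $(a-b)/(a+b)$, so the unconditional probability is $(a-b)/(a+b)$ as well.

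For the cycle lemma itself, I would extend the partial sums periodically by setting $S_0 = 0$, $S_k = x_1 + \cdots + x_k$ for $1 \leq k \leq a+b$, and then $S_{k+a+b} = S_k + (a-b)$ for all $k \geq 0$. A rotation starting at index $j+1$ has all strictly positive partial sums iff $S_{j+k} > S_j$ for every $k = 1,\ldots,a+b$, i.e.\ iff $j$ is the unique index in $\{j, j+1, \ldots, j+a+b-1\}$ at which the extended sequence $(S_m)$ attains its minimum on that window. Since each step of $(S_m)$ changes by $\pm 1$ and $S_{m+a+b} - S_m = a-b > 0$, a counting argument shows that in any window of length $a+b$ the minimum is attained at exactly $a-b$ indices whose value is not re-attained later in the window; these are precisely the $a-b$ good starting positions.

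The main obstacle is really just a clean statement and proof of the cycle lemma; everything else is bookkeeping. Once that piece is in place, the coupling with a uniform random rotation collapses the computation to a one-line averaging argument, giving the stated probability $\max\{0,(a-b)/(a+b)\}$.
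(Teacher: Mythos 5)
Your proposal is correct and matches the paper's proof in essence: both use the Dvoretzky--Motzkin cycle lemma, reduce via a uniformly random cyclic rotation, extend the partial sums periodically, and count good starting positions as those whose partial-sum value is never re-attained (the paper phrases this via the last-occurrence function $t(z)$, arriving at $a-b$ congruence classes mod $a+b$). One caveat in your phrasing of the cycle lemma: the $a-b$ good positions are \emph{not} ``indices where the window minimum is attained and not re-attained'' (the window minimum occurs at only one such index); rather they are the $a-b$ indices $j$ for which the value $S_j$ is never re-attained at any later time in the extended walk --- your earlier, correct characterization $S_{j+k}>S_j$ for all $k$ makes this clear, so the counting step should be stated to match it.
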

The theorem was first proved in 1887~\cite{Ballot1,Ballot2,Renault}.
One elegant proof, originally due to 
Dvoretzky and Motzkin~\cite{DvorMotz}, 
is presented in Appendix~\ref{sec:ballot}.

Unfortunately, the \sumseq is not a uniformly random permutation.  A bias arises since the number of unhappy elements of each type is not precisely equal.  If at some point there are more unhappy $o$'s, say, than $x$'s, then the satisfaction time of an attacking $x$ is more likely to happen earlier.  In Section~\ref{subsec:unhappies}, we will show that the number of unhappy elements is approximately balanced for a sufficiently long time.  Now at any point in time, we artificially correct the small imbalance as follows: suppose there are $m$ extra unhappy elements of one type, say $x$.  Then choose $m$ unhappy $x$-elements at random and call them {\it censored}.  Call a swap {\it censored} if it involves a censored element.  (There is actually one more technicality here: we also want to censor swaps if both elements are combatants of $F$.  This necessitates a subtle modification to the censorship construction, with no significant quantitative consequences for the proof.)  Since swaps are between random unhappy elements, as long as the imbalance is small, the probability that a swap is censored is also small.  Conditioning on having no censored swaps, the \sumseq is indeed a uniformly random permutation.  There is a stopping time $t_0$ at which the imbalance ceases to be small, and we cannot guarantee that censored swaps are unlikely after $t_0$; however, we can show that with probability $1-o(1)$, no individual in $F$ is impatient at time $t_0$.  Consequently, we can obtain a \psumseq by randomly permuting the combatants whose satisfaction times are after $t_0$, and provided that no censored swaps occurred before $t_0$, the \psumseq is a uniformly random permutation.  Then, Proposition~\ref{prop:partialsums} follows from Lemma~\ref{lem:cyclic}.

Our main theorem follows fairly directly from Proposition~\ref{prop:firewallforms} and Proposition~\ref{prop:partialsums}.

\begin{theorem}
Consider the segregation process with window size $w$ 
on a ring network of size $n$, starting from a uniformly random
initial configuration.  There exists a constant $c < 1$ and a 
function $n_0 : \mathbb{N} \to \mathbb{N}$ such that for all
$w$ and all $n \geq n_0(w)$, with probability $1-o(1)$, the
process reaches a configuration 
after finitely many steps in which no further swaps are 
possible.  The average run length in this final configuration
is $O(w^2)$.  In fact, the distribution of runlengths in the 
final configuration
is such that for all $\lambda>0$, the probability of a randomly selected node
belonging to a run of length greater than $\lambda w^2$ is 
bounded above by $c^{\lambda}$.
\end{theorem}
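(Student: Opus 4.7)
The plan is to combine Propositions~\ref{prop:firewallforms} and~\ref{prop:partialsums} with a partition of the ring into disjoint $6w$-blocks to show that, with high probability, firewalls of both colors form within $O(\lambda w^2)$ of any fixed node, which bounds the length of the run containing that node in the frozen final configuration.

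I would first reduce the tail bound to a firewall-existence statement. By Proposition~\ref{prop:frozenstate} the process terminates in a frozen configuration with probability $1-o(1)$, so I may condition on that event. Fix any node $v$ and any $\lambda>0$, and set $L=\lambda w^2/2$. Observe that if within distance $L$ of $v$ on each of the two sides there exists both an $x$-firewall and an $o$-firewall in the final configuration, then regardless of the color assigned to $v$ there is a node of the opposite color within distance $L$ on each side of $v$, so the run containing $v$ has length at most $2L=\lambda w^2$. It therefore suffices to bound by $c^\lambda$ the probability that at least one of these four firewalls fails to form.

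Next I would partition the $L$ nodes immediately to the left of $v$ into $L/(6w)=\lambda w/12$ disjoint consecutive blocks of size $6w$, and do the same on the right. By combining Propositions~\ref{prop:firewallforms} and~\ref{prop:partialsums}, each such block $B$ has probability at least $\alpha/w$ (for an absolute constant $\alpha>0$) of eventually containing an $x$-firewall, and by the $x\!\leftrightarrow\!o$ symmetry of the dynamics the same bound holds for $o$-firewalls. If the per-block firewall-formation events were mutually independent across disjoint blocks, the probability that no $x$-firewall forms in any of the $\lambda w/12$ left blocks would be at most $(1-\alpha/w)^{\lambda w/12}\le e^{-\alpha\lambda/12}$, and a union bound over the four side-color combinations would yield the desired tail bound $c^\lambda$ after an appropriate choice of $c<1$. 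The $O(w^2)$ bound on the expected run length would then follow by integrating the tail: $\int_0^\infty\Pr[\text{run-length at }v>t]\,dt\le O(w^2)\int_0^\infty c^\lambda\,d\lambda=O(w^2)$.

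The main obstacle is justifying this approximate independence of firewall-formation events across disjoint blocks, since swaps that cross block boundaries couple the dynamics in different blocks. I plan to address this via the censored-swap coupling introduced after Proposition~\ref{prop:partialsums}: conditioned on the high-probability event that no censored swap ever touches the combatants of a block $B$ before its ``all-patient'' time $t_0$, the event ``$B$ forms an $x$-firewall'' is determined by local data---the initial labels on $B$ together with its two attacker blocks and an internal uniform random permutation used in generating its pseudo-transcript---and these local data are genuinely independent across blocks separated by a buffer of $O(w)$ nodes. Inserting such buffers between the selected sample blocks still leaves $\Theta(\lambda w)$ effectively independent samples within the window of length $L$, preserving the exponential-in-$\lambda$ decay up to constants. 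Carefully controlling the rare event that a censored swap propagates across a buffer, and verifying that the permutation-based local description of the event remains valid on the high-probability coupling set, is the technically delicate step.
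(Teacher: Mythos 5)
Your proposal takes essentially the same approach as the paper: use Proposition~\ref{prop:frozenstate} for termination, tile a $\Theta(\lambda w^2)$ neighborhood of any fixed site into disjoint $6w$-blocks, and combine Propositions~\ref{prop:incubators-happen}, \ref{prop:firewallforms}, and \ref{prop:partialsums} to get an $\Omega(1/w)$ per-block probability of a firewall of each color forming, which yields exponential decay in $\lambda$. The one place you go further is in explicitly flagging approximate independence of the per-block firewall events as the delicate point to be handled via the censored-swap coupling; the paper's own proof of the theorem asserts the $c^{\lambda}$ bound directly without spelling out that step.
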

\begin{proof}
By Proposition ~\ref{prop:frozenstate}, with high probability, 
the process reaches a 
frozen configuration in which no further swaps are possible.
To bound the distribution of runlengths in the frozen configuration, 
consider a randomly sampled site, once
again denoted by $a$.  As we scan clockwise from $a$ in the
initial configuration, let us divide the ring into disjoint
blocks of length $6w$.  Each of these blocks has
probability $\Omega(1/w)$ of containing an $x$-firewall
incubator
having left- and right-pseudo-transcripts whose partial sums 
are all non-negative.  (Proposition~\ref{prop:partialsums}.)
Of course, similar statements hold
with $o$ in place of $x$.  Thus,
for a suitable constant $c < 1$,
the probability that none of the 
first $\lambda w / 6$ blocks encountered
on a clockwise scan of length-$(6w)$ blocks
starting from $a$ contain $x$-firewalls in the final 
frozen configuration is bounded above by $c^{\lambda}$.
The same conclusion holds with $o$ in place of $x$
and with counterclockwise in place of clockwise, by
symmetry.  Node $a$ cannot belong to a monochromatic
run of length greater than $\lambda w^2$ 
assuming that it has individuals of both labels within this
radius on both sides of itself, and this completes the proof.
\end{proof}

\subsection{Bounding number of unhappy elements}
\label{subsec:unhappies}
\newcommand{\sv}{{\zeta}}
\newcommand{\esv}{{\zeta}}
\newcommand{\vctr}[1]{{\bm{#1}}}
\newcommand{\Z}{{\mathbb{Z}}}
\newcommand{\xx}{{x}}
\newcommand{\oo}{{o}}
\newcommand{\allseq}{{\{\xx,\oo\}^{1 \ldots n}}}
\newcommand{\stoptime}{T_0}
\newcommand{\taint}{{D}}
\newcommand{\ntaint}{{d}}
\newcommand{\invol}{\iota}

This section sketches a proof 
that the numbers
of unhappy $\xx$'s and $\oo$'s remain nearly balanced
until late in the segregation process.  More precisely,
define a stopping time $\stoptime$ to be the earliest time
when fewer than $3n/w^2$ individuals are impatient.
Theorem~\ref{thm:balance} below asserts that when
the ring size $n$ is sufficiently large, it holds
with high probability that at all times $t \leq \stoptime$
the numbers of unhappy $\xx$'s
and $\oo$'s differ by at most $n/w^4$.
The full proof of the theorem is given in Appendix~\ref{sec:wormald}.
The theorem statement is plausible because 
the entire stochastic process is symmetric under interchanging
the roles of $\xx$ and $\oo$.  Thus, one would expect nearly
equal numbers of unhappy $\xx$'s and $\oo$'s in the initial
configuration, and one would expect this near-balance to persist
for many steps after the initialization of the 
stochastic process, since the process itself has no
bias in favor of reducing the number of unhappy $\xx$'s 
more rapidly than the number of unhappy $\oo$'s or vice-versa.

We can express the segregation process as a continuous-time
process (with state changes only at times that are multiples of 
$1/n$) whose state variables encode, for every 
$k$ and every sequence $\sigma = (\sigma_0,\ldots,\sigma_k) \in 
\{\xx,\oo\}^k$, the fraction of sites such that $\sigma$
describes the labeling of that site and its $k$ nearest
clockwise neighbors.  Any other parameter depending only
on the frequency of occurrence of certain bounded-size 
configurations (e.g.~the fraction of unhappy $\xx$'s
and $\oo$'s) can be expressed as a function of these state
variables.  
In the continuum limit, the state variables
are not random at all; they evolve deterministically
according to a system of differential equations.
The operation of interchanging the symbols $\xx$ and $\oo$ defines
a permutation of the set of state vectors, and the fixed-point
set of this permutation is an invariant set for the differential equation
because the derivative at any such point must also (by symmetry)
be preserved under the operation of interchanging $\xx$
and $\oo$ and therefore the differential equation solution
can never exit the set of vectors preserved by this operation.
In the
continuum-limit process the initial state belongs to this
fixed-point set and therefore the state vector at all future
times remains invariant under interchanging $\xx$ and $\oo$,
which gives a heuristic justification of the fact that the 
numbers of unhappy $\xx$'s and $\oo$'s remain nearly equal
throughout the segregation process, or until the process becomes so close to ``frozen" that the continuum-limit approximation no longer applies.

Wormald's technique~\cite{wormald} provides a mathematically
rigorous method for justifying these differential-equation
approximations and quantifying the approximation error.  
Using this technique, 
we derive the associated differential equation as a limit of discrete-time difference equations and show that it is quadratic with coefficients independent of $n$. However, our problem resists a straightforward application of Wormald's
technique because it has infinitely many state variables with
infinitely long dependency chains.\footnote{Wormald's technique is
applicable to differential equations with state variables
$Y_1,Y_2,\ldots$ such that the derivative of $Y_i$ depends 
only on $Y_1,\ldots,Y_{i}$ for all $i$, but unfortunately
our variables do not admit such an ordering.}  

To circumvent
this difficulty, we do not directly analyze the segregation 
process on a ring of size $n$.  Instead, for a function $L(w)$
that grows sufficiently rapidly, we analyze the
segregation process on a disjoint union of $n/L(w)$ rings 
each having length $L(w)$.\footnote{For simplicity, we 
assume that $n$ is divisible by $L(w)$.  In general, we 
would have to analyze a disjoint union of $\lfloor n/L(w) \rfloor$
rings each having length $L(w)$ or $L(w)+1$.}  
We then bound the error resulting from approximating a single ring by a collection of bounded-length
rings by defining a notion of ``taint'' that allows us to couple the
single-ring and bounded-length ring versions of the process.  Taint is defined in such a way that for any untainted node, the node and its neighbors have the same label in both versions of the process. We then bound the number of untainted sites with high probability using martingale techniques.   Combining these arguments, we obtain the following result:
\begin{theorem} \label{thm:balance}
Consider the segregation process on a ring of length $n$ with
window size $w$.
For all $w$ and all sufficiently
large $n$ (i.e.\ all $n \geq n_0(w)$, for 
some function $n_0$),
with probability greater than 
$1 - \tfrac{1}{w}$,
the number of unhappy $\xx$'s differs from the
number of unhappy $\oo$'s by at most $n/w^4$ at
every time $t \leq T_0$, where $T_0$ denotes the
earliest time when 
fewer than $3n/w^2$ individuals are impatient.
\end{theorem}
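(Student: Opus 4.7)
The plan is to follow the roadmap sketched in the excerpt: reduce to a finitely-dimensional Wormald-style analysis by replacing the length-$n$ ring with a disjoint union of length-$L(w)$ rings, exploit the $\xx \leftrightarrow \oo$ symmetry of the differential equation to conclude that in the continuum limit the unhappy populations are exactly equal, and then translate the deterministic prediction back to the random process with quantitative error bounds. First I would fix a slowly-growing length scale $L(w)$ (e.g.\ $L(w) = w^{C}$ for a sufficiently large constant $C$), partition the ring of length $n$ into $n/L(w)$ arcs of length $L(w)$, and define an auxiliary process in which swaps are only proposed between pairs whose $w$-neighborhoods lie inside the same arc. Call a site \emph{tainted} at time $t$ if some event in the original process up to time $t$ has propagated label information across an arc boundary into its $w$-neighborhood. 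Tainting spreads by at most a bounded amount per step and only a sub-polynomial fraction of proposed swaps produce new taint before time $T_0$, so a standard Azuma/Doob-martingale bound controls the number of tainted sites by $n/w^{5}$ (say) with probability $1-o(1/w)$ uniformly for $t \leq \stoptime$. On every untainted site the two processes have identical labels, so the counts of unhappy $\xx$'s and $\oo$'s in the two processes differ by at most the number of tainted sites.

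Next I would analyze the decoupled process on a single length-$L(w)$ ring using Wormald's theorem. The state variables are the empirical frequencies $Y_\sigma(t)$ of each finite pattern $\sigma \in \{\xx,\oo\}^{k}$ for $k \leq L(w)$; because the arc is bounded, this is a finite family of variables, and since a single swap can only affect the frequencies of patterns intersecting the $2w$-neighborhood of the two swapped sites, the trend functions $F_\sigma(\vctr{y})$ are Lipschitz polynomials in $\vctr{y}$ with coefficients independent of $n$. Wormald's theorem then supplies a system of ODEs $\tfrac{d}{dt}y_\sigma = F_\sigma(\vctr{y})$, and guarantees that with probability $1-o(1)$ the empirical variables track the ODE solution within additive error $o(1)$ uniformly on any time interval of length $O(1)$ (in rescaled time $t/n$), up to the first moment the drift fails to be Lipschitz.

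The crux is the symmetry argument. The involution $\invol$ that swaps the symbols $\xx$ and $\oo$ sends $Y_\sigma \mapsto Y_{\invol(\sigma)}$, commutes with the one-step dynamics, and hence commutes with the trend functions $F_\sigma$. The initial distribution (uniform $1/2$-coin flips) is $\invol$-invariant in expectation, and by a Chernoff bound the initial state lies within $o(1)$ of the fixed-point set $\{\vctr{y} : y_\sigma = y_{\invol(\sigma)}\ \forall \sigma\}$. Since this fixed-point set is an invariant manifold of the ODE, the deterministic trajectory remains on it for all time, so in the ODE solution the fraction of unhappy $\xx$'s equals the fraction of unhappy $\oo$'s identically. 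Combining with Wormald's $o(1)$ approximation error and the $n/w^5$ taint bound from the first step yields that, in the original process, the two unhappy counts differ by at most $n/w^4$ with the claimed probability, provided the time is before the deterministic fraction of unhappy individuals drops below the threshold $3n/w^2$ defining $\stoptime$.

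The main obstacle will be step two: making Wormald's technique actually apply. The differential equation on the full ring has infinitely many variables and long dependency chains, and even after truncating to a length-$L(w)$ arc one must check that (i) the Lipschitz constants and boundedness hypotheses hold with bounds independent of $n$, (ii) the approximation remains valid on a time interval long enough to reach $\stoptime$, and (iii) the rescaled Wormald error $O(1/\sqrt{L(w)})$ combined with the taint error $O(1/w^5)$ stays below the target $1/w^4$. Choosing $L(w)$ large enough to dominate all the error terms but small enough that the bounded-arc process remains tractable is where the quantitative work concentrates, and is what forces the restriction to $n \geq n_0(w)$.
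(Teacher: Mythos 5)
Your overall roadmap matches the paper's: truncate to a bounded-length decoupled process, couple via a notion of tainted sites, apply Wormald on the bounded process, and exploit the $\xx\leftrightarrow\oo$ involution to conclude that the continuum-limit trajectory stays on the symmetric invariant manifold. However, two of your quantitative claims do not survive scrutiny, and a key step is missing.

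First, $L(w)=w^C$ is far too small. The taint process is multiplicative: each swap involving a tainted site spreads taint to $O(w)$ new sites, so the supermartingale calculation gives $\E[|\taint(t)|] \leq e^{12wt/n}\left(\tfrac{2w-1}{L(w)}\right)n$. The relevant time horizon is $\kappa(w)n$ with $\kappa(w)=\Theta(w^2\log w)$ (see below), so the taint grows by a factor of roughly $e^{24 w^3 \log w}$ before the process stops. To keep the final taint below $n/w^5$ you are forced to take $L(w)$ superexponential in $w$, not polynomial. Relatedly, the Wormald normalized error is $O(n^{-1/4})$, not $O(1/\sqrt{L(w)})$: it depends on $n$, not on the truncation length, and this is one reason the theorem requires $n\geq n_0(w)$.

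Second, and more fundamentally, Wormald's theorem only gives tracking on a deterministic time interval $[0,\kappa n]$, but the statement you are proving concerns the random stopping time $T_0$. You cannot simply appeal to ``the deterministic fraction of unhappy individuals dropping below $3n/w^2$'': you must show $T_0 \leq \kappa(w) n$ with high probability so that the entire window $[0,T_0]$ falls inside the Wormald approximation window. Note also that $T_0$ is defined via \emph{impatient} individuals (unhappy and not-yet-selected), not merely unhappy ones, which changes the argument. The paper does this by a separate supermartingale on the fraction of pending nodes $\phi(t)$: one shows $\E[\phi(t+1)\mid\cdot]\leq(1-\tfrac{2}{w^2 n})\phi(t)$ as long as both $t<T_0$ (so there are at least $3n/w^2$ impatient individuals, forcing $\psi(t)\geq 3/w^2$) and $t<T_1$ (so $\min\{\psi_\xx,\psi_\oo\}\geq\psi/3$), and then uses the exponential supermartingale $e^{2t/w^2n}\phi(t)$ together with Markov's inequality. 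This is how $\kappa(w)=2w^2\log w$ is determined, and it is precisely what ties the ODE timescale to the stopping time in the theorem. Without that argument --- or some substitute for it --- your proof bounds the imbalance on the wrong time interval and does not establish the statement as claimed.
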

As mentioned earlier, the full proof is given in Appendix~\ref{sec:wormald}.

\section{Open Questions}
\label{sec:openq}

In this paper we have shown that the one-dimensional
Schelling segregation process with window size $w$
leads, with high probability, to a ``frozen configuration''
in which most nodes belong to monochromatic
runs of size at least $\Omega(w)$ and at most $O(w^2)$.
We are hopeful
that the upper bound can be improved to $O(w)$, using 
an extension of the techniques introduced 
in this paper, but this strengthening of the result
is beyond the scope of the present work.  Assuming
it is correct that most nodes belong to monochromatic
runs of size $\Theta(w)$, it becomes natural to
conjecture that the distribution of runlengths, normalized
by $1/w$, converges to a distribution $F$.  In other
words, if
$n=n(w)$ grows sufficiently fast as a function of $w$,
then for all $r > 0$,
as $w \to \infty$ 
the probability that a randomly selected node belongs to a 
run of length less than $rw$ in the frozen configuration
converges to a limit $F(r)$.  Proving such a limit
theorem seems beyond the reach of the techniques
introduced here, to say nothing of characterizing
the precise runlength distribution $F$, if it exists.

Several parts of our analysis hinged on symmetry arguments
that are specific to the case in which $x$ and $o$
are equally likely in the initial configuration, and in
which the threshold 
$\tau$, defining the fraction of neighbors that must
be of the same type as an individual in order for that
individual to be satisfied, is equal to $\tfrac12$.  It is 
quite possible that when one varies either of these 
assumptions, the model's behavior is qualitatively 
different; for example, the lengths of the runs in the
frozen configuration may become exponential rather than
polynomial in $w$.  Understanding how the 
one-dimensional model's behavior
varies as we vary the parameter $\tau$ or the 
$x$-to-$o$ ratio are important questions for
future work.  A related open problem is to analyze
the one-dimensional segregation process when the tolerance 
threshold $\tau$ may vary from one individual to another.

Finally, and most ambitiously, there is the open
problem of rigorously analyzing the Schelling model
in other graph structures including two-dimensional
grids.  Simulations of the Schelling model in two
dimensions reveal beautiful and intricate patterns
that are not well understood analytically.  
Perturbations of the model have been successfully 
analyzed using stochastic stability analysis
by Zhang~\cite{zhang_dynamic_2004,zhang_residential_2004,zhang_tipping_2011},
but the non-perturbed model has not been rigorously
analyzed.  Two-dimensional lattice models are almost
always much more challenging than one-dimensional ones,
and we suspect that to be the case with Schelling's
segregation model.  But it is a challenge worth undertaking:
if one is to use the Schelling model to gain insight into
the phenomenon of residential segregation, it is vital to
understand its behavior on two-dimensional grids
since they reflect the structure of so many 
residential neighborhoods  in reality.

\section{Acknowledgements}
We are grateful to Jon Kleinberg for introducing us to the fascinating
topic of Schelling's segregation model and to Alistair Sinclair for
supplying us with valuable insights in the early stages of this work.

\bibliographystyle{abbrv}
\bibliography{biblio}

\appendix
\section{Deferred Proofs}
In this appendix we present proofs that were deferred
from Section~\ref{sec:analysis}.
\label{sec:runapp}

\subsection{Proof of Proposition~\ref{prop:partialsums}}
\label{sec:partialsums}


\newcommand{\INC}{{\mbox{\sc inc}}}
\newcommand{\NN}{{\mbox{\sc non-neg}}}
\newcommand{\A}{{\mbox{\sc swap}}}
\newcommand{\B}{{\mbox{\sc time}}}
\newcommand{\Pad}{{\mbox{\sc pad}}}
\newcommand{\ts}{{\tau}}


We would like to show that a random block evolves into a firewall.   We already know that such a block contains a firewall incubator with constant probability, and that an incubator becomes a firewall if there are pseudo-transcripts such that all partial sums are non-negative.  Thus the crux of the argument is to show that such pseudo-transcripts exist with sufficiently high probability.  This would follow from Lemma~\ref{lem:cyclic} if the transcripts were random permutations, but that is not precisely true.  The reason is that the global number of unhappy elements of each type might be imbalanced, creating an imbalance in the probability of an $o$-swap versus an $x$-swap.  We correct this imbalance by censoring certain swaps and then conditioning on the event that the transcript involves no censored swaps.

We begin with a definition of {\it censored swaps}.  
%
%
Fix a block $B$ 
%
%
and a time $t$.  Let the padded block $\Pad(B)$ consist of $B$ together with the $w$ nodes on its left and the $w$ nodes on its right. Let $n_x(t,B)$ be the number of unhappy $x$-elements {\it outside} $\Pad(B)$ at time $t$, and define $n_o(t,B)$ similarly.  Suppose that $n_x(t,B)\leq n_o(t,B)$ (the other case is similar), and let $k=n_o(t,B)-n_x(t,B)$.  Then the censored pairs involving an $o$-element in $\Pad(B)$ are precisely those involving an $x$-element in $\Pad(B)$.  Let $C$ consist of an arbitrary subset of $k$ unhappy $o$-elements outside $\Pad(B)$.  For an $x$-element inside $\Pad(B)$, the censored pairs are those involving an $o$-element in $\Pad(B)\cup C$.  All other pairs are uncensored.  A proposed swap is censored if the corresponding pair is censored.  

The following properties of this definition will be useful in our proof.  The first two properties will help us prove that transcripts of uncensored swaps are uniformly random permutations.  The third property will be used to show that censored swaps are rare.  For the third property to follow, we need to argue that there are sufficiently many unhappy elements so that the ratio of censored to uncensored swaps is small.  
To this end, recall from Theorem~\ref{thm:balance} that up until time $T_0$, the combined number of unhappy individuals is at least $3n/w^2$.  

\begin{lem}
\label{lem:censorprop}
For any block $B$ and time $t$,
\begin{enumerate}
\item \label{cprop:1}
every pair of elements in $\Pad(B)$ is censored,
\item \label{cprop:2}
every element in $\Pad(B)$ is in an equal number of uncensored pairs,
\item \label{cprop:3}
and if $t<T_0$, then with probability $1-1/w$, every unhappy element in $\Pad(B)$ has at most $n/w^4+|\Pad(B)|$ censored and at least 
$\tfrac32 n/w^2-n/w^4-|\Pad(B)|$ uncensored partners.
\end{enumerate}
\end{lem}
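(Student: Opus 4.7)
\textbf{Proof plan for Lemma~\ref{lem:censorprop}.} Parts (\ref{cprop:1}) and (\ref{cprop:2}) will be purely combinatorial, following directly from the censorship definition; Part (\ref{cprop:3}) will be a quantitative estimate obtained by combining Theorem~\ref{thm:balance} with the definition of the stopping time $T_0$.

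For Part (\ref{cprop:1}), I will simply appeal to the construction. By definition, a pair $(x,o)$ in which the $o$-element lies in $\Pad(B)$ is censored if and only if the $x$-element also lies in $\Pad(B)$, and a pair in which the $x$-element lies in $\Pad(B)$ is censored if its $o$-partner lies in $\Pad(B)\cup C$. In either case, whenever both partners lie in $\Pad(B)$ the pair is censored.

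For Part (\ref{cprop:2}), I will work out the two symmetric cases and observe that both give the same count. Assume without loss of generality that $n_x(t,B)\le n_o(t,B)$, so that $|C|=k=n_o(t,B)-n_x(t,B)$ unhappy $o$'s outside $\Pad(B)$ are placed in $C$. An unhappy $x$ inside $\Pad(B)$ has as uncensored partners exactly the unhappy $o$'s outside $\Pad(B)\cup C$, of which there are $n_o(t,B)-k = n_x(t,B)$. An unhappy $o$ inside $\Pad(B)$ has as uncensored partners exactly the unhappy $x$'s outside $\Pad(B)$, of which there are $n_x(t,B)$. The counts agree, so every unhappy element in $\Pad(B)$ has the same number $n_x(t,B)$ of uncensored partners.

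For Part (\ref{cprop:3}), I will combine the stopping-time definition with Theorem~\ref{thm:balance}. Condition on the high-probability event of that theorem, which holds with probability at least $1-1/w$: for every $t\le T_0$, the total numbers $N_x,N_o$ of unhappy $x$'s and $o$'s satisfy $|N_x-N_o|\le n/w^4$. Since $t<T_0$, at least $3n/w^2$ individuals are impatient and hence unhappy, so $N_x+N_o\ge 3n/w^2$; combined with the balance bound, this forces $\min(N_x,N_o)\ge \tfrac{3n}{2w^2}-\tfrac{n}{2w^4}$. Writing $N_x^P,N_o^P$ for the numbers of unhappy $x$'s and $o$'s in $\Pad(B)$, I have $n_x(t,B)=N_x-N_x^P\ge \tfrac{3n}{2w^2}-\tfrac{n}{w^4}-|\Pad(B)|$, which by Part (\ref{cprop:2}) is the required lower bound on the number of uncensored partners. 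For the upper bound on censored partners, in the case $n_x\le n_o$ the censored partners of an unhappy $x\in\Pad(B)$ number $N_o^P+k = (N_o-N_x)+N_x^P\le n/w^4+|\Pad(B)|$, while the censored partners of an unhappy $o\in\Pad(B)$ number $N_x^P\le |\Pad(B)|$; the symmetric case $n_x>n_o$ is handled identically with the roles of $x$ and $o$ swapped.

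The only substantive ingredient is the balance bound imported from Theorem~\ref{thm:balance}; everything else is immediate from the definitions and elementary arithmetic, so there is no real obstacle once that theorem is in hand.
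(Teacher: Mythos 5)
Your proof is correct and follows essentially the same route as the paper, which merely asserts that Parts (\ref{cprop:1}) and (\ref{cprop:2}) ``follow by construction'' and Part (\ref{cprop:3}) ``follows immediately from Theorem~\ref{thm:balance}''; you have simply filled in the counting and arithmetic that those phrases are summarizing. In particular, your identity $N_o^P+k = (N_o-N_x)+N_x^P$ and the resulting bounds are exactly what one needs, and your use of $T_0$ to force at least $3n/w^2$ impatient (hence unhappy) individuals at times $t<T_0$ together with the global balance from Theorem~\ref{thm:balance} is the intended argument.
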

\begin{proof}
The first two properties follow by construction.  The third property follows immediately from Theorem~\ref{thm:balance}.
\end{proof}

We are now ready to prove Proposition~\ref{prop:partialsums} which we restate here for convenience.

\begin{prop}
If $B$ is a random block of length $6w$, then with probability
$\Omega(1/w)$, $B$ contains a firewall incubator having left- and 
right-\psumseq{}s
whose partial sums are all non-negative.
\end{prop}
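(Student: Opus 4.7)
The plan is to combine three ingredients: (i) Proposition~\ref{prop:incubators-happen} provides a firewall incubator $F = D_L \cup I \cup D_R$ inside $B$ with probability $\Omega(1)$; (ii) Theorem~\ref{thm:balance} together with the censoring construction of Lemma~\ref{lem:censorprop} will let me show that, conditional on no combatant of $F$ participating in a censored swap before $T_0$, the satisfaction-time ordering of combatants is a uniformly random permutation; and (iii) the Ballot Theorem (Lemma~\ref{lem:cyclic}) then yields non-negative partial sums with probability $\Omega(1/\sqrt{w})$ on each side, and the two sides will be conditionally independent.

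First I would condition on the $\Omega(1)$-probability event from Proposition~\ref{prop:incubators-happen} that $B$ contains an $\xx$-firewall incubator $F$. The bias condition $\beta_0(i) > \sqrt{w}$ for $i \in F$, evaluated at the extreme endpoints of $D_L$ and $D_R$, translates directly into $a_L - d_L = \Omega(\sqrt{w})$ and $a_R - d_R = \Omega(\sqrt{w})$; this surplus is what will make the Ballot Theorem succeed. I would then condition on the event of Theorem~\ref{thm:balance}, which holds with probability at least $1 - 1/w$.

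Next I would invoke the censoring construction. By Lemma~\ref{lem:censorprop}(3), at each time $t < T_0$ every unhappy element in $\Pad(F)$ has $\Omega(n/w^2)$ uncensored partners and only $O(n/w^4)$ censored partners, so each proposal involving a combatant is censored with probability only $O(1/w^2)$. Since each combatant has only one satisfaction time and there are only $O(w)$ combatants, a union bound gives $\Pr[\neg\mathcal{E}_{\mathrm{cen}}] = O(1/w)$, where $\mathcal{E}_{\mathrm{cen}}$ is the event that no combatant participates in a censored proposal before $T_0$. Likewise, since at time $T_0$ at most $3n/w^2$ individuals are impatient while there are only $O(w)$ combatants, an averaging argument bounds $\Pr[\neg\mathcal{E}_{\mathrm{done}}]$, the probability that some combatant is still impatient at $T_0$, by $O(1/w)$. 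On the event $\mathcal{E}_{\mathrm{cen}} \cap \mathcal{E}_{\mathrm{done}}$, properties~1 and~2 of Lemma~\ref{lem:censorprop} (every pair inside $\Pad(F)$ is censored, and every element of $\Pad(F)$ lies in the same number of uncensored pairs) imply that the sequence of combatant satisfaction-times is exchangeable and that left- and right-combatants are conditionally independent. Appending a uniformly random permutation of the combatants whose satisfaction time exceeds $T_0$ then yields a left- and a right-\psumseq{} that are themselves uniformly random orderings of $a_L$ copies of $+1$ with $d_L$ copies of $-1$ (and similarly on the right).

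At this point the Ballot Theorem (Lemma~\ref{lem:cyclic}) gives that each \psumseq has all strictly positive --- hence non-negative --- partial sums with probability $(a_L - d_L)/(a_L + d_L) = \Omega(1/\sqrt{w})$, and by conditional independence the two sides satisfy this simultaneously with probability $\Omega(1/w)$. Multiplying by the constant probability of the incubator existing and absorbing the $1 - O(1/w)$ probabilities of $\mathcal{E}_{\mathrm{cen}}$, $\mathcal{E}_{\mathrm{done}}$, and the event of Theorem~\ref{thm:balance}, one obtains the claimed $\Omega(1/w)$ bound. The main obstacle I expect is rigorously justifying the exchangeability claim used to get uniformly random orderings: one must carefully unwind the conditional probability space given $\mathcal{E}_{\mathrm{cen}}$ to see that at each step where a combatant's satisfaction time is realized, each remaining combatant on a given side is equally likely to be the one realizing it --- this is exactly where the symmetric censoring construction of Lemma~\ref{lem:censorprop}(2) does the real work.
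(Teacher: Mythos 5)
Your proposal assembles the right ingredients in the right order --- Proposition~\ref{prop:incubators-happen} for the incubator, the censoring construction for exchangeability, the Ballot Theorem for the $\Omega(1/\sqrt{w})$-per-side bound --- and this is indeed the paper's strategy. But the conditioning structure you propose has a gap precisely at the place you flag as ``the main obstacle,'' and it is worth being explicit about how the paper's argument is arranged to sidestep it. You propose to establish that the transcript is a uniformly random permutation \emph{conditional on} $\mathcal{E}_{\mathrm{cen}} \cap \mathcal{E}_{\mathrm{done}}$, where $\mathcal{E}_{\mathrm{done}}$ is the future-measurable event that all combatants reach their satisfaction times before $T_0$. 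Conditioning on a future event like $\mathcal{E}_{\mathrm{done}}$ can bias the order in which combatants are satisfied --- e.g.\ histories in which defending $o$'s leave early may plausibly reach the low-impatience threshold defining $T_0$ sooner, so $\mathcal{E}_{\mathrm{done}}$ is not obviously independent of the transcript --- and Lemma~\ref{lem:censorprop}(1),(2) alone does not neutralize this. Your ``append a uniformly random permutation of the combatants whose satisfaction time exceeds $T_0$'' is vacuous on $\mathcal{E}_{\mathrm{done}}$ (there are none), and cannot repair the conditioning issue because the scrambling is supposed to be a \emph{substitute} for conditioning on everyone finishing, not an add-on after that conditioning.

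The paper's resolution decouples the two conditionings. It defines $t_0$ as the first time no individual in $F$ is impatient (not $T_0$), defines the scrambled pseudo-transcript $S(\ts)$ by randomly permuting \emph{only} the combatants with satisfaction time after $t_0$, and defines $\A$ as ``no unhappy element of $\Pad(B)$ participates in a censored swap before $t_0$.'' Crucially, the uniformity argument (Claim~1 in the paper) conditions \emph{only} on $\A$ and on $\INC$ (indeed, on the entire initial labeling of $\Pad(F)$), never on the $\B$-type event $t_0 < T_0$: combatants processed before $t_0$ are shown to arrive in uniformly random order by Lemma~\ref{lem:censorprop}(2), and combatants with satisfaction time after $t_0$ are placed by the explicit scrambling, so the full sequence is uniform without ever conditioning on what happens after $t_0$. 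The analogue of your $\mathcal{E}_{\mathrm{done}}$ --- the event $\B$ that $t_0 < T_0$ --- is used only in the \emph{probability estimate} (Claim~2), where $\Pr[\A \mid \B,\INC] = 1-O(1/w)$ is shown via Theorem~\ref{thm:balance} and Lemma~\ref{lem:censorprop}(3), and $\Pr[\B,\INC]=\Omega(1)$ is shown by a Markov argument on the number of impatient individuals at time $T_0$. Your sketch also glosses over the fact that $\A$ must cover all unhappy elements of $\Pad(B)$, not just the combatants, for property~(2) to apply at every step, and your bound $\Pr[\neg\mathcal{E}_{\mathrm{cen}}]=O(1/w)$ is really a bound conditional on near-balance and on $t_0 < T_0$, not an unconditional one. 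None of these are fatal to the strategy, but they are exactly the bookkeeping that separates the paper's proof from the sketch.
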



\begin{proof}
Let $\INC$ denote the event that the initial labeling of
$B$ contains a firewall incubator.  
We know from Proposition~\ref{prop:incubators-happen} that
$\Pr(\INC)=\Omega(1)$.  Let us fix a firewall incubator $F$ in $B$
(if it exists) for the remainder of the proof.  
Consider the \sumseq $\ts$ that combines the left- and right-\sumseq{}s
of $F$, i.e., the sign sequence obtained by listing all of the 
combatants of $F$
(both right and left) 
in reverse order of satisfaction time and translating each combatant 
in the list to the appropriate sign.  
Let $t_0$ be the earliest time, if it exists, at which no individuals
in $F$ are impatient.  (If no such time exists, $t_0 = \infty$.)
Define the \emph{scrambled pseudo-transcript}, $S(\ts)$, to be a 
sign sequence obtained from $\ts$ by permuting, uniformly at random, the signs
associated to nodes whose satisfaction time is after $t_0$.
Let $S(\ts_L), \, S(\ts_R)$ be the subsequences of $S(\ts)$
corresponding to left- and right-combatants.  Our goal is to show
that, with probability $\Omega(1/w)$, $S(\ts_L)$ and $S(\ts_R)$
are left- and right-\psumseq{}s for $F$ whose partial sums are
all non-negative.  Denote this event by $\NN$; note that $\NN$
is a subset of $\INC$, since otherwise the incubator $F$ and 
its combined transcript $\ts$ are not even defined.

We define two more events to be used throughout this proof.  
Let $\Pad(F)$ denote the block consisting of $F$ and its 
associated attacker blocks $A_L,A_R$.  Recall that $t_0$
is the earliest time at which no individuals in $F$ are
impatient, or $t_0=\infty$ if no such time exists.
Then the two events of interest are:
\begin{itemize}
\item $\A$, the event that no unhappy element of $\Pad(B)$ participates in a censored swap before time $t_0$,
\item $\B$, the event that $t_0<T_0$.  (Recall that $T_0$ is the earliest time
when fewer than $3n/w^2$ individuals are impatient.)
\end{itemize}
We would like to prove $\Pr[\NN]=\Omega(1/w)$.
We do this 
using the following pair of claims.
\begin{claim} $\Pr[ \NN | \A, \INC]=\Omega(1/w)$.\end{claim}
\begin{proof}
Let $i_1,\ldots,i_\ell$ denote the list of combatants in the order 
that their signs appear in $S(\ts)$.  In other words, the sequence 
$i_1,\ldots,i_\ell$ consists of a random permutation of the
combatants whose satisfaction time is later than $t_0$, 
followed by a listing of the others in decreasing order
of satisfaction time.  
We first show that for any initial labeling of $\Pad(F)$, 
given event $\A$, the list $i_1,\ldots,i_\ell$ 
is a uniformly random permutation of the combatants.  
First note that, by property~\ref{cprop:1} of Lemma~\ref{lem:censorprop}, 
combatants only swap with non-combatants, so the list is unique 
(i.e., there are no two combatants with equal satisfaction times).  
To prove that the list is a random permutation, we show that
for every $k$, if we condition on the subsequence
$i_{k+1},\ldots,i_\ell$ (in addition to conditioning on event $\A$), 
then $i_k$ is a uniformly random sample from the set $C_k$ of combatants
not listed in $i_{k+1},\ldots,i_\ell$.
This is proven by analyzing two cases.
Denote the satisfaction time of $i_k$ by $t^*$.
If $t^* > t_0$, then by construction $i_1,\ldots,i_k$ is 
a uniformly random permutation of $C_k$, so $i_k$ is
a uniformly random sample from $C_k$.  On the other 
hand, if $t^* \leq t_0$, 
then by property~\ref{cprop:2} of Lemma~\ref{lem:censorprop},
each of the individuals in $\{i_1,\ldots,i_k\}$ is equally
likely to participate in an uncensored swap at time $t^*$.


Note that this means $S(\ts_L)$ and $S(\ts_R)$ are also independent uniformly random permutations for any initial configuration of $\Pad(F)$ given events $\A,\INC$.  As $\beta_0(i)>\sqrt{w}$ for all $i\in F$, and in particular for the left-most and right-most nodes, we know that $a_L-d_L > \tfrac12 \sqrt{w}$
(and similarly for $a_R-d_R$).  Furthermore, $a_L+d_L\leq 2w+1$ (and similarly for $a_R+d_R$).  Thus, applying the Ballot Theorem (Lemma~\ref{lem:cyclic}), we see that the probability $S(\ts_L)$ (similarly $S(\ts_R)$) has non-negative partial sums given events $\A$ and $\INC$ is $\Omega(1/\sqrt{w})$.  Therefore, by the independence of $S(\ts_L)$ and $S(\ts_R)$, $\Pr[ \NN | \A,\INC]=\Omega(1/w)$ as claimed.
\end{proof}

\begin{claim}$\Pr[\A, \INC] = \Omega(1)$ for $w$ sufficiently large.
\end{claim}  
\begin{proof}
Recall the event $\B$, that $t_0 < T_0$.
We will establish the stronger claim that $\Pr[\A,\B,\INC]=\Omega(1)$
for $w$ sufficiently large.

We first work on bounding $\Pr[\A | \B, \INC]$.
Given event $\B$, we know from Theorem~\ref{thm:balance}
that there is at most $O(1/w)$ probability of
there being a time $t \leq t_0$ at which the
numbers of unhappy $\xx$'s and $\oo$'s differ by more than $n/w^4$.
If there is no such time $t$ then
property~\ref{cprop:3} of Lemma~\ref{lem:censorprop} implies that
the probability the proposed swap for a combatant is censored is 
at most $O(1/w^2)$.  By the union bound, the probability that 
any proposed swap in the transcript 
is censored is at most $O(1/w)$.  
Thus,
$\Pr[\A | \B, \INC] = 1-O(1/w)$.

It remains for us to show that $\Pr[\B,\INC] = \Omega(1)$.
We will use the equation
\begin{equation} \label{eq:binc}
\Pr[\B,\INC] = \Pr[\INC] - \Pr[\INC \wedge \overline{\B}].
\end{equation}
The first term on the right side is $\Omega(1)$,
by Proposition~\ref{prop:incubators-happen}.
To bound the second term, recall that
by definition of $T_0$, 
the total number of impatient individuals 
in the ring is less than $3n/w^2$.
Since $B$ is a random block of size $6w$,
the expected number of impatient individuals
in $B$ at time $T_0$ is less than $18/w$, so
by Markov's inequality, the (unconditional) probability that
our random block $B$ contains any impatient
individual at time $T_0$ is at most $18/w$.
However, if event $\INC \wedge \overline{\B}$
occurs, it means that $F$ must contain an 
impatient individual at
time $T_0$.  Thus, $\Pr[\INC \wedge \overline{\B}] \leq 18/w$.
Equation~\ref{eq:binc} now says that
$\Pr[\B,\INC] = \Omega(1) - O(1/w) = \Omega(1)$,
for large enough $w$.
\end{proof}
\noindent Noting that $\Pr[\NN] \geq \Pr[\NN \mid \A,\INC] \cdot \Pr[\A,\INC]$
and applying the preceding two claims finishes the proof. \end{proof}

\subsection{Partial Sums of Random Permutations}
\label{sec:ballot}

This subsection proves Lemma~\ref{lem:cyclic}, the 
Ballot Theorem.  The first proof appeared in 1887,
and many proofs have been discovered since then.
We present one such proof, which is essentially the 
one given by Dvoretzky and Motzkin~\cite{DvorMotz},
in order to make our exposition more self-contained.

\begin{proof}
We will prove a stronger assertion: if a fixed
sequence $y_1,y_2,\ldots,y_{a+b}$ containing $a$ copies
of $+1$ and $b$ copies of $-1$ is permuted by a random
cyclic permutation to obtain $x_1,\ldots,x_{a+b}$, 
then the probability that all partial sums 
$x_1+\cdots+x_j \; (1 \leq j \leq a+b)$ are 
strictly positive is $\max\{0,\tfrac{a-b}{a+b}\}$.
This suffices to prove the lemma, since a uniformly
random permutation composed with a random cyclic
permutation is again uniformly random.

If $a \leq b$ then the partial sum 
$x_1+\cdots+x_{a+b}$ is non-positive, which implies
that the probability is zero, as claimed.  Henceforth
assume $a>b$.  Extend the sequence $x_1,x_2,\ldots,x_{a+b}$
to a period infinite sequence $x_1,x_2,\ldots$ with period
$a+b$, and let $s_j$ be the $j^{\mathrm{th}}$ partial sum
of this sequence, $s_j = \sum_{i=1}^j x_i$.  (When $j=0$,
we define $s_j$ to be zero.)  The identity
$s_{j+a+b} = s_j + a - b$ implies that $s_j \to \infty$
as $j \to \infty$.  Hence, for any integer $z$ there are at 
most finitely many $j$ such that $s_j=z$; define $t(z)$ to be
the largest such $j$, unless there is no $j$ with $s_j=z$
in which case $t(z)$ is undefined.  We know at least that
$t(z)$ is defined for all $z \geq 0$, because $s_0=0$ and 
$s_{j+1}$ exceeds $s_j$ by at most 1, for all $j$. 

Figure~\ref{fig:perm} illustrates the extension of a sequence and shows for several choices of $z$ the corresponding value $t(z)$.
\begin{figure}[t]
  \centering
\includegraphics[width=4.5in]{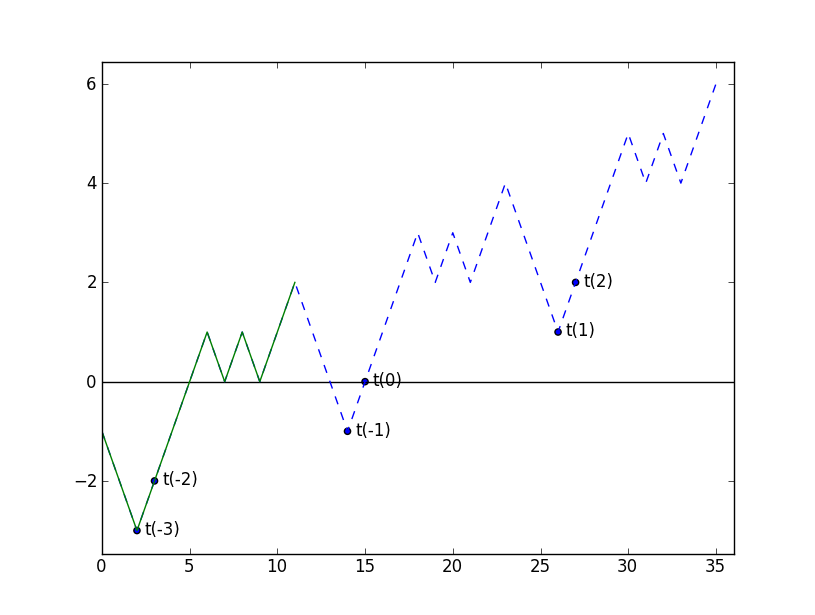}
  \caption{The partial sums of the extended sequence.}
\label{fig:perm}
\end{figure}

Once again using the identity $s_{j+a+b} = s_j + a - b$,
we find that $t(z+a-b) = t(z) + a - b$ for all $z$ such
that $t(z)$ is defined.  Thus, the image of $t$ (i.e.,~the
set of all $j$ such that $t(z)=j$ for some $z$) consists
of all the non-negative integers in exactly $a-b$ distinct
congruence classes  modulo $a+b$.  
Note
that 0 belongs to the image of $t$ if and only if
$t(0)=0$, and that this happens if and only if 
all of the partial sums $s_j \; (1 \leq j \leq a+b)$ are
strictly positive.  Thus, we devote the rest of the proof
to showing that $\Pr(0 \mbox{ belongs to the image of } t)
= \tfrac{a-b}{a+b}$.

Let us consider how the image of $t$ changes when we 
cyclically permute the sequence $x_1,x_2,\ldots,x_{a+b}$.
Denote the permuted sequence by 
$x'_1,x'_2,\ldots,x'_{a+b} = x_2,x_3,\ldots,x_1.$
As before make $x'_1,x'_2,\ldots$ into an infinite periodic
sequence and
denote its partial sums by 
$$
s'_j = \sum_{i=1}^j x'_i = \sum_{i=2}^{j+1} x_i = s_{j+1} - x_1.
$$
Let $t'(z)$ be the largest $j$ such that $s'_j = z$.  From
the formula $s'_j = s_{j+1}-x_1$ we immediately see that 
$t'(z) = t(z+x_1) -1$ for all $z$ such that both sides are 
defined.  Thus, the congruence classes in the image of $t'$
are obtained from those in the image of $t$ by subtracting 1
modulo $a+b$.  As we run through all of the cyclic permutations
of the sequence $x_1,x_2,\ldots,x_{a+b}$, we shift the set of 
congruence classes in the image of $t$ by subtracting each
element of $\mathbb{Z}/(a+b)$.  Since the image of $t$ contains
exactly $a-b$ congruence classes, the probability that the 
image contains $0$ when we apply a random cyclic permutation
is exactly $\tfrac{a-b}{a+b}$.
\end{proof}

\subsection{Applying Wormald's Technique}
\label{sec:wormald}

This section provides formal proofs of the arguments in Section \ref{subsec:unhappies}.

\paragraph{Tainted nodes and bounded-length ring approximation}
Assume now that $n$ is divisible by $L=L(w)$, and consider
a modified segregation process in which the nodes are still
numbered $1,2,\ldots,n$, but the graph structure $G$ is now a 
union of disjoint cycles of length $L$.  Specifically, nodes
$i$ and $j$ are connected if and only if 
$\lfloor i/L \rfloor = \lfloor j/L \rfloor$
and $i \equiv j \pm 1 \pmod{L}$.  Note that for large $L$,
the overwhelming majority of nodes
have the same set of neighbors in $G$ as in the
$n$-cycle; the exceptions are those $i$ that belong
to one of the congruence classes $-w+1, -w+2, \ldots, w-2, w-1$
mod $L$.

We will couple the segregation process in $G$
with the segregation process in the $n$-cycle $C_n$
in the obvious way: every time we choose two
locations $i,j$ for a proposed swap in $C_n$, we choose
the same pair of locations $i,j$ for a proposed swap in $G$.
Starting from identical labelings at time 0, differences 
in the two labelings will emerge whenever the coupling proposes
a swap between two locations that are both unhappy in one graph
but not the other.  To bound the rate at which
such differences develop, we define the set of \emph{tainted}
nodes at time $t$, $\taint(t)$, by the following inductive definition.
At $t=0$ the set of tainted nodes consists of all the nodes
$i$ belonging to the congruence classes $-w+1, -w+2, \ldots, w-2, w-1$
mod $L$.  If the two nodes $i,j$ that are chosen for a proposed swap
at time $t>0$ are both untainted, then $\taint(t) = \taint(t-1)$.
Otherwise $\taint(t)$ is the union of $\taint(t-1)$ with the set
of all nodes that are $w$-step neighbors of either $i$ or $j$ 
in either $G$ or $C_n$.  Note that 
\begin{equation} \label{eq:taint-6w}
|\taint(t)-\taint(t-1)| \leq 6w
\end{equation}
since the number of $w$-step neighbors of any one node
is $2w$ in each graph, and at least $w$ of them are $w$-step
neighbors in both graphs.
\begin{lem} \label{lem:untainted}
At any time $t$ in the coupling of the segregation processes on 
$G$ and $C_n$, if node $i$ is untainted, then $i$ and all of its
$w$-step neighbors have the same label in both graphs.
\end{lem}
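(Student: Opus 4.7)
The plan is to prove the lemma by induction on $t$. For the base case $t=0$ the labelings of $G$ and $C_n$ are identical by construction, so all that needs checking is that any untainted $i$ has $N_w^G(i) = N_w^{C_n}(i)$. This is immediate from the definition of the initial tainted set: the residues $-w+1,\ldots,w-1 \pmod L$ are precisely those whose $w$-neighborhood in $C_n$ straddles the boundary between two length-$L$ blocks of $G$, so an untainted $i$ has its entire $w$-neighborhood contained in a single block of $G$, whence the two neighborhoods coincide. Since tainted nodes remain tainted forever, this neighborhood-equality is available for any $i$ that is untainted at any time $t$, a fact I will use throughout the induction.

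For the inductive step, assume the statement at time $t-1$ and let $(j,k)$ be the pair proposed for swap at time $t$. I split into two cases based on $\{j,k\}\cap D(t-1)$. In Case 1, both $j$ and $k$ are untainted at $t-1$. The inductive hypothesis applied to $j$ and to $k$ then gives that the labels of $j,k$ and of all their $w$-step neighbors agree in $G$ and $C_n$; combined with $N_w^G(j)=N_w^{C_n}(j)$ and $N_w^G(k)=N_w^{C_n}(k)$, this means $j$ and $k$ have the same happiness status in the two graphs, so the proposed swap is executed in $G$ if and only if it is executed in $C_n$. If no swap occurs, nothing changes. Otherwise $j$ and $k$ exchange labels simultaneously in both graphs, and for any $i$ untainted at $t$ the equality $N_w^G(i)=N_w^{C_n}(i)$ ensures that $j$ or $k$ lies in $i$'s $w$-neighborhood in $G$ exactly when it lies in it in $C_n$, so the induced label updates inside $i$'s $w$-neighborhood match.

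In Case 2 at least one of $j,k$ lies in $D(t-1)$. Then by definition $D(t)$ swallows every $w$-step neighbor of $j$ and $k$ in either graph, so any $i$ still untainted at time $t$ must satisfy $i \notin \{j,k\}\cup N_w^G(j)\cup N_w^{C_n}(j)\cup N_w^G(k)\cup N_w^{C_n}(k)$. Equivalently, $\{j,k\}$ is disjoint from $\{i\}\cup N_w^G(i)\cup N_w^{C_n}(i)$, so whether the swap is executed in $G$, in $C_n$, in both, or in neither, no label inside $i$'s $w$-neighborhood in either graph is altered, and the inductive hypothesis transfers verbatim to time $t$.

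Nothing in this argument is deep; the whole point of the taint definition is to make the induction go through. The one subtle step is in Case 1, where the inductive hypothesis must be invoked at the swap endpoints $j,k$ themselves (and not merely at $i$) in order to synchronize the swap decisions across the two graphs. This is precisely the mechanism by which the coupling stays locked together throughout the untainted portion of the ring, and it is the reason the taint is defined to grow by exactly the $w$-neighborhoods in \emph{both} $G$ and $C_n$ whenever a tainted site is touched.
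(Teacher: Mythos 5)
Your proof is correct and follows essentially the same argument as the paper: an induction on $t$, in which the untainted residues are characterized as those with matching $w$-step neighborhoods in $G$ and $C_n$, and the inductive step splits on whether the proposed swap touches a tainted site. The only difference is expository: you spell out that in the untainted-swap case the induction hypothesis must be applied at the two swap endpoints themselves (not merely at the node $i$ under consideration) in order to synchronize the happiness test and hence the swap decision across $G$ and $C_n$, a point the paper compresses into a single sentence.
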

\begin{proof}
The proof is an easy induction on $t$.  When $t=0$ this follows
from the fact that both graphs have the same initial labeling, and
a node has the same $w$-step neighbor set in both graphs unless
it belongs to $\taint(0)$.  When $t>0$, the only nodes whose 
$w$-step neighborhood may experience a label change are those 
located within $w$ steps of the nodes $i,j$ that were selected
for the proposed swap.  If $i$ or $j$ is tainted at time $t-1$
then all such nodes become tainted at time $t$.  
If both $i$ and $j$ are untainted at
time $t-1$, then the induction hypothesis implies that the proposed
swap affects the labeling of both graphs in the same way.
\end{proof}

\begin{lem} \label{lem:tainted}
The expected number of tainted nodes at time $t \geq 0$ is bounded
above by $e^{12wt/n} \left( \tfrac{2w-1}{L(w)} \right) n$.  The probability
that $\tfrac{|\taint(t)|}{n} > 2 e^{12wt/n} \left( \tfrac{2w-1}{L(w)} \right)$ 
is at most $\exp \left( - \tfrac{wn}{36 L(w)^2} \right)$.
\end{lem}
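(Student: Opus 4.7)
The plan is to establish the two bounds separately: an elementary one-step recurrence for the expectation, and a martingale concentration inequality for the tail.

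For the expectation, I first note that $|\taint(0)| = (2w-1)n/L$, since the initial tainted set is the union of the $2w-1$ congruence classes $-w+1,\ldots,w-1$ modulo $L$, each containing $n/L$ nodes. The one-step bound~\eqref{eq:taint-6w} says $|\taint(s)| - |\taint(s-1)| \leq 6w$, and this difference is nonzero only when at least one of the two uniformly selected swap endpoints is tainted, which by a union bound occurs with conditional probability at most $2|\taint(s-1)|/n$. Hence
\[
E[|\taint(s)| \mid \mathcal{F}_{s-1}] \leq |\taint(s-1)|\left(1 + \frac{12w}{n}\right),
\]
and iterating, combined with $(1+12w/n)^t \leq e^{12wt/n}$, gives the claimed bound on $E[|\taint(t)|]$.

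For the tail bound, I work with the rescaled process $M_s := |\taint(s)|/(1+12w/n)^s$, which by the above is a supermartingale starting at $M_0 = Z_0 := (2w-1)n/L$. Since $(1+12w/n)^t \leq e^{12wt/n}$, the event $\{|\taint(t)| > 2 Z_0 e^{12wt/n}\}$ is contained in $\{M_t > 2 Z_0\}$, so it suffices to bound the probability that $M$ increases by more than $Z_0$ above its starting value. I apply Freedman's martingale concentration inequality to the process stopped at $\tau := \min\{s : |\taint(s)| > 2 Z_0 (1+12w/n)^s\}$; this stopping time ensures $|\taint(s-1)| \leq 2 Z_0 (1+12w/n)^{s-1}$ on $\{s \leq \tau\}$. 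On this set, the increment $|M_s - M_{s-1}|$ is bounded by $6w$, and the conditional variance is at most $72 w^2 |\taint(s-1)|/(n(1+12w/n)^{2s})$, which telescopes over $s$ to a total predictable quadratic variation of order $w Z_0$. Plugging these into Freedman's bound yields a tail of the form $\exp\bigl(-\Omega(Z_0^2/(wZ_0))\bigr) = \exp(-\Omega(Z_0/w)) = \exp(-\Omega(n/L))$, which dominates $\exp(-wn/(36 L^2))$ once $L(w)$ is sufficiently large relative to $w$ (say $L(w) \geq w$), as is the case when the function $L$ is chosen to grow sufficiently rapidly.

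The main technical hurdle is that the variance bound for the martingale increment depends on $|\taint(s-1)|$, which is itself random; stopping at $\tau$ converts this state-dependent bound into a uniform one on the relevant event, after which Freedman's inequality applies cleanly. A parallel alternative would be to construct an exponential supermartingale $\exp(\lambda_s |\taint(s)|)$ with a geometrically decreasing sequence $\lambda_s$ tuned so that the super-martingale property holds step by step, and then apply Markov's inequality directly; I expect the stopping-time-plus-Freedman route to be the conceptually cleaner path to the stated concentration bound.
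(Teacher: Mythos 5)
Your expectation bound is exactly the paper's argument. For the tail, you and the paper both pass to a normalized supermartingale, but then you diverge: you invoke Freedman's inequality on the process stopped at $\tau=\min\{s:|\taint(s)|>2Z_0(1+12w/n)^s\}$, using the stopping time to convert the state-dependent increment and variance bounds into uniform ones. The paper avoids this machinery entirely with one clean observation you did not use: $|\taint(t)|\leq n$ deterministically. With $Y_t=\ntaint(t)e^{-12wt/n}$ this forces $Y_t\leq n e^{-12wt/n}$ \emph{almost surely}, which in turn gives the unconditional increment bound $|Y_{t+1}-Y_t|\leq 12w(Y_t/n)+6we^{-12w(t+1)/n}\leq 18w e^{-12wt/n}=:c_t$. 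Since $\sum_t c_t^2<18wn$, a direct application of Azuma's inequality gives $\exp(-Z_0^2/36wn)$, and no stopping time is needed. Your route is valid and, with the sharper variance control from Freedman, actually yields a slightly stronger tail ($\exp(-\Omega(n/L))$ rather than $\exp(-\Omega(wn/L^2))$) whenever $L(w)\geq w$ --- which the paper's choice of $L$ certainly satisfies. One small imprecision: your claimed increment bound of $6w$ for $|M_s-M_{s-1}|$ on the stopped process is not quite right as stated; the exact one-step decomposition $M_s-M_{s-1}=(1+12w/n)^{-s}\bigl(\Delta_s-(12w/n)|\taint(s-1)|\bigr)$ with $0\leq\Delta_s\leq 6w$ gives $|M_s-M_{s-1}|\leq 6w+24wZ_0/n$, which is $O(w)$ only after using $L(w)\gg w$; this folds into your same conclusion but deserves a line of justification. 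Also note Freedman is usually stated for martingales, so you should point at (or supply) a supermartingale version; the paper's Azuma argument has the same caveat but the fix is standard in both cases.
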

\begin{proof}
Let $i,j$ be the pair of locations that are chosen for a proposed
swap at time $t$.  Let $\ntaint(t) = |\taint(t)|$.
By the union bound, the probability that either
$i$ or $j$ belongs to $\taint(t)$ is at most $2\ntaint(t)/n$.  
If so, the $w$-step neighbors of $i$ and $j$ in $G$ and $C_n$ are added to
$D(n+1)$.  
We have seen in Equation~\eqref{eq:taint-6w} above
that $\ntaint(t+1) - \ntaint(t) \leq 6w$ in that case, and otherwise 
$\ntaint(t+1) = \ntaint(t)$.  
Therefore, 
$$
\expect{\ntaint(t+1) \mid \ntaint(t)} 
\leq \ntaint(t) + 6w \cdot (2 \ntaint(t) / n)
= \ntaint(t) \cdot \left( 1 + \frac{12w}{n} \right)
< \ntaint(t) \cdot e^{12w/n}.
$$
This shows that the sequence of random variables 
$Y_t = \ntaint(t) e^{-12wt/n}$ is a supermartingale.
The first statement of the lemma follows immediately
from this fact, together with the initial condition 
$Y_0 = \left(\tfrac{2w-1}{L(w)}\right)n$.  We prove the second statement
using Azuma's Inequality, which means we first need an upper bound
of the form $|Y_t - Y_{t+1}| \leq c_t$ almost surely.  
The calculation
\begin{align*}
|Y_t - e^{-12w/n} Y_t| + |Y_{t+1} - e^{-12w/n} Y_t| & \leq 12w (Y_t/n) + 6w e^{-12w(t+1)/n} < 18w e^{-12wt/n}
\end{align*}
justifies setting $c_t = 18w e^{-12wt/n}$.  Note that
$\sum_{t=0}^{\infty} c_t^2 = (18w)^2 (1 - e^{-24w/n})^{-1} < 18wn$,
from which the second statement of the lemma follows.
\end{proof}

\paragraph{Bounding the asymmetry of the process}
The segregation process on a disjoint union of rings of length $L(w)$
can be analyzed using Wormald's differential equation technique, a
method that applies to families of vector-valued 
discrete-time stochastic processes indexed by an integer $n$, 
whose behavior converges to a solution of a 
continuous-time differential equation as $n$ tends to
infinity.  We can representing a state of the segregation process
on $G$ by a $2^{L(w)}$ dimensional vector $\vctr{\sv}$
whose components are 
indexed by strings $\sigma \in \{\xx,\oo\}^{L(w)}$. 
The component $\sv_\sigma$ is defined to be the number
of nodes $i$ such that the ring containing $i$ 
is labeled
(say, in clockwise order starting from $i$) with the 
string $\sigma$. 


It is not difficult to work out a formula for the
conditional expectation $\expect{\esv_\sigma(t+1) \,|\,
\vctr{\esv}(t)}$.  
We simply have to enumerate all of the
ways that the number of rings with 
labeling $\sigma$ 
could increase or decrease in a single step, and work
out their probabilities.  For any $j \in \{1,\ldots,L(w)\}$
and $\sigma,\sigma',\sigma'' \in \{\xx,\oo\}^{L(w)}$,
consider a proposed swap between the $j^{\mathrm{th}}$ node of
a ring $R_1$ whose 
labeling (starting from the first node)
is $\sigma'$, and the first node of a ring $R_2$ whose 
labeling is $\sigma''$.  Define a coefficient
$a=a(j,\sigma,\sigma',\sigma'')$ as follows.  If 
$\sigma \neq \sigma'$ and the proposed swap results
in $R_1$ having 
labeling $\sigma$ instead of $\sigma'$, then $a = 1$.
If $\sigma=\sigma'$ and the proposed swap results
in $R_1$ having an 
labeling other than $\sigma$, then $a=-1$.  Otherwise $a=0$.
Having made these definitions, we can derive
the formula
\begin{eqnarray} 
\expect{\esv_\sigma(t+1) - \esv_\sigma(t) 
\,|\, \vctr{\esv}(t)=\esv}
 = 2 \sum_{j=1}^{L(w)} \! \sum_{\sigma',\sigma''}
a(j,\sigma,\sigma',\sigma'') \frac{\esv_{\sigma'} \esv_{\sigma''}}{n^2} 
 \, +  \, O \! \left( \! \frac{L(w)}{n} \! \right) \! .
  \label{eq:diffeq-discrete}
\end{eqnarray}
The first term accounts for the expected net change in the number of 
instances of $\sigma$ when going from step $t$ to step $t+1$.
 (The factor of 2
is because either of the two locations selected for the proposed 
swap at time $t$ could be the one that belongs to the ring $R_1$.)
None of the terms in the sum account for the probability 
that both selected locations belong to the same ring; 
this event has probability
$L(w)/n$, and it accounts for the $O(L(w)/n)$ correction term
at the end of~\eqref{eq:diffeq-discrete}. 

We are now in a position to apply Theorem 5.1 of~\cite{wormald},
using the quadratic function $f(\vctr{\esv}/n)$ obtained by removing
the $O(L(w)/n)$ term from the right side of~\eqref{eq:diffeq-discrete}.
That theorem has three hypotheses: a boundedness hypothesis that is
trivially satisfied because $|\esv_{\sigma}(t+1) - \esv_{\sigma}(t)| \leq 2 L(w)$,
a trend hypothesis that corresponds to Equation~\eqref{eq:diffeq-discrete},
with our error term $O(L(w)/n)$ playing the role of the term that
Wormald denotes by $\lambda_1$, and a Lipschitz hypothesis that
is satisfied because we have explicitly written our function $f$
as a sum of a bounded (i.e.\ independent of $n$) number of quadratic
functions whose coefficients are independent of $n$ as well.
Wormald's Theorem has two conclusions.  The first asserts the
existence of a unique solution to the differential equation:
for $\vctr{z}$ in the open set 
$D = \{ \vctr{\esv} \mid 0 < \esv_{\sigma} < 1 \; \forall \sigma\}$
the system of 
differential equations $\tfrac{d \vctr{z}}{dx} = f(\vctr{z})$
has a unique solution in $D$ satisfying the initial condition
$\vctr{z}(0) = \vctr{\esv}(0)/n$ and extending to points arbitrarily
close to the boundary of $D$. 
The second, more important, conclusion is that for a sufficiently
large constant $C$, with probability $1 - O(n^{1/4} \exp(-n^{1/4}))$,
$\esv_{\sigma}(t) = n z_{\sigma}(t/n) + O(n^{3/4})$
for all $t \leq \kappa n$, provided that the constant $\kappa$ is
such that the differential equation solution $\vctr{z}$ satisfies
$\min_{\sigma} \min_{0 \leq x \leq \kappa} \{z_{\sigma}(x)\} \geq C n^{-1/4}.$

We will now apply Wormald's theorem,
to prove that the numbers of unhappy
$\xx$'s and $\oo$'s 
remain nearly balanced with high probability,
by exploiting symmetry properties of the initial condition and 
of the differential equation itself.  
For a string $\sigma \in \{\xx,\oo\}^{L(w)}$ let
$\overline{\sigma}$ denote the string obtained 
from $\sigma$ by replacing every occurrence of
$\xx$ with $\oo$ and vice-versa.
For a $2^{L(w)}$-dimensional
vector $\vctr{z}$, let $\invol(\vctr{z})$ denote the
vector whose $\sigma^{\mathrm{th}}$ component is $z_{\overline{\sigma}}$
for all $\sigma$. 
It is straightforward to verify that the function $f$ defining
our differential equation satisfies $\invol(f(\vctr{z})) = 
f(\invol(\vctr{z}))$ for all $\vctr{z}$, either by analyzing
the formula~\eqref{eq:diffeq-discrete} defining $f$, or by
directly appealing to the symmetry of the rules defining the
segregation process.  Consequently, the fixed-point set 
of $\invol$ is invariant under the differential
equation: if $\vctr{z}(0)$ is a fixed point of $\invol$
then so is $\vctr{z}(x)$.  

The only remaining difficulty is that our discrete-time
process (as opposed to its continuum limit) does not start
from an initial state vector that is literally invariant
under the mapping $\invol$; it is only approximately
invariant.  The following paragraphs quantify the error
in this approximation and its contribution to the overall
error term in our application of Wormald's Theorem.

Define a function $u(\sigma)$ by specifying that
$u(\sigma)=1$ if a ring with
labeling $\sigma$ contains an unhappy $\xx$ 
in its first node, $u(\sigma)=-1$ if the first node
is an unhappy $\oo$, 
and $u(\sigma)=0$ otherwise.
Consider the linear function
\begin{align*}
\Delta(\vctr{z}) &= 
\sum_{\sigma} u(\sigma) z_\sigma .
\end{align*}
Note that $\Delta(\esv(t))$ is simply the difference between the
fraction of nodes containing unhappy individuals of type $\xx$ and $\oo$
at time $t$, 
and that $\Delta \equiv 0$ on the fixed-point set of $\invol$.
For constants $c,\kappa>0$, let $S(c,\kappa)$ denote the set of 
vectors $\vctr{z}$ such that the differential equation solution
with $\vctr{z}(0) = z$ satisfies $|\Delta(\vctr{z}(x))| < c$ 
for all $x \in [0,\kappa]$.  By the continuity properties
of ordinary differential equations (e.g., pages 135--136 
of~\cite{diffeq_book}) it follows that $S(c,\kappa)$ is an
open set.  Furthermore, it contains the set
$\Sigma = \{\vctr{z} \mid \invol(\vctr{z}) = \vctr{z}, \, 
\forall \sigma \, z_\sigma \geq 0, \, \sum_\sigma z_\sigma = 1\}.$
Since $\Sigma$ is compact, it follows that there is a constant
$\delta(c,\kappa)$ such that every point within
distance $\delta(c,\kappa)$ of $\Sigma$ belongs to $S(c,\kappa)$.

The expected squared distance from $\vctr{\esv}(0)/n$ to
$\Sigma$ is easy to bound from above, 
because $\vctr{\esv}(0)/n - \invol(\vctr{\esv}(0))/n$
is a sum of $n/L(w)$ independent random vectors (corresponding to
the $n/L(w)$ rings), each having
mean 0 and expected squared length at most $(L(w) / n)^2$.  Consequently,
by Markov's inequality,
the probability that the squared distance of $\vctr{\esv}(0)/n$ from
$\Sigma$ exceeds $\delta^2 = (\delta(c,\kappa))^2$ is at most 
$\tfrac{L(w)}{\delta^2 n}$.

Combining all of these arguments, we have the following result.

\begin{theorem} \label{thm:balance-toogeneral}
Consider the segregation process on a ring of length $n$ with
window size $w$.
Suppose $c = c(w)$ and $\kappa = \kappa(w)$ are positive 
numbers such that $c > 2 e^{12 w \kappa} \left( \tfrac{2w-1}{L(w)} \right)$. 
Then there is a positive constant
$\gamma = \gamma(c,\kappa)$ such that for all sufficiently large $n$,
with probability greater than $1 - \tfrac{3}{\gamma n}$,
at every time $0 \leq t \leq \kappa n$, 
the numbers of unhappy $\xx$'s and $\oo$'s differ by
at most $3cn$.
\end{theorem}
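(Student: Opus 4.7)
The plan is to assemble the three ingredients already developed in the preceding discussion: Wormald's differential-equation technique applied to the disjoint-rings process on $G$, the $\invol$-symmetry of the limiting ODE, and the taint-based coupling between the processes on $G$ and $C_n$.

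First I would invoke Theorem 5.1 of~\cite{wormald} for the vector-valued process $\vctr{\esv}(t)$ on $G$, with drift function $f$ equal to the quadratic term on the right-hand side of~\eqref{eq:diffeq-discrete}. The boundedness, trend, and Lipschitz hypotheses are immediate from $|\esv_\sigma(t+1)-\esv_\sigma(t)|\le 2L(w)$ and from the fact that $f$ is a sum of finitely many quadratics whose coefficients do not depend on $n$. Wormald's conclusions yield, first, a unique solution $\vctr{z}(x)$ of $\dot{\vctr{z}}=f(\vctr{z})$ with $\vctr{z}(0)=\vctr{\esv}(0)/n$, and second, the uniform approximation $\esv_\sigma(t)=n\,z_\sigma(t/n)+O(n^{3/4})$ for all $0\le t\le \kappa n$, valid outside an event of probability $O(n^{1/4}\exp(-n^{1/4}))$.

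Second I would use the ODE symmetry to control $\Delta(\vctr{z}(x))$. Because $\invol\circ f=f\circ\invol$, the fixed-point set $\Sigma$ of $\invol$ inside the probability simplex is forward-invariant under the ODE, and $\Delta$ vanishes identically on $\Sigma$. By continuous dependence of ODE solutions on initial data, the set $S(c,\kappa)$ of initial conditions whose trajectories stay within $|\Delta|<c$ on $[0,\kappa]$ is open and contains $\Sigma$; compactness of $\Sigma$ then produces a $\delta=\delta(c,\kappa)>0$ such that the entire $\delta$-neighborhood of $\Sigma$ lies in $S(c,\kappa)$. The random vector $\vctr{\esv}(0)/n-\invol(\vctr{\esv}(0))/n$ is a sum of $n/L(w)$ independent, zero-mean contributions (one per ring of $G$), each of squared length at most $(L(w)/n)^2$, so Markov's inequality bounds the probability that $\vctr{\esv}(0)/n$ lies outside the $\delta$-neighborhood of $\Sigma$ by $L(w)/(\delta^2 n)$. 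Combining this with Wormald's approximation shows that, with probability $1-O(1/n)$ for large enough $n$, the imbalance of unhappy $\xx$'s and $\oo$'s in the disjoint-rings process on $G$ is at most $2cn$ throughout $[0,\kappa n]$.

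Third I would transfer the bound to $C_n$ using the taint coupling. Lemma~\ref{lem:untainted} implies that the imbalance of unhappy individuals in $C_n$ and in $G$ differ by at most $|\taint(t)|$, and Lemma~\ref{lem:tainted} together with the hypothesis $c>2e^{12w\kappa}\bigl(\tfrac{2w-1}{L(w)}\bigr)$ gives $|\taint(t)|<cn$ for all $t\le \kappa n$, outside an event of probability $\exp(-wn/(36L(w)^2))$. A union bound over the three failure events (Wormald's deviation, the initial state lying too far from $\Sigma$, and the taint martingale bound) keeps the total failure probability at most $3/(\gamma n)$ for a suitable $\gamma=\gamma(c,\kappa)>0$, and on the good event the imbalance in $C_n$ is bounded by $2cn+cn=3cn$. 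The main obstacle is the second step: since the finite-$n$ initial state is only approximately $\invol$-invariant, one cannot simply declare the ODE trajectory to live in $\Sigma$. The compactness/continuity argument that promotes the \emph{exact} invariance of $\Sigma$ to a quantitative stability estimate on an explicit $\delta$-neighborhood, paired with the probabilistic bound on how close the random initial state lies to $\Sigma$, is the only conceptually nontrivial piece of the proof.
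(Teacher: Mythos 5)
Your proposal follows essentially the same route as the paper: the same three-way decomposition into (i) Wormald's approximation on the disjoint-rings process, (ii) a compactness/openness argument that promotes the exact $\invol$-invariance of $\Sigma$ into a stable $\delta(c,\kappa)$-neighborhood of controlled $|\Delta|$ together with a Markov-inequality bound on the random initial state's distance from $\Sigma$, and (iii) the taint coupling to transfer the $2cn$ bound on $G$ to a $3cn$ bound on $C_n$ via Lemma~\ref{lem:tainted}. The identification of step (ii) as the conceptual heart of the argument also matches the paper's emphasis.
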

\begin{proof}
Set the constants $c,\kappa$ as in the above discussion,
and set $\gamma = \delta^2/L(w)$ where $\delta = \delta(c,\kappa)$
is defined above.
There are three possible reasons that the numbers of unhappy
$\xx$'s and $\oo$'s could differ by
more than $3cn$ at some time $0 \leq t \leq \kappa n$.  
\begin{enumerate*}
\item {\em The distance of the initial vector $\vctr{\sv}(0)/n$ 
from the set $\Sigma$ is greater than $\delta(c,\kappa)$.  }

We have
seen above 
that the probability of this event is at most $\tfrac{1}{\gamma n}$.
\item {\em The vector $\vctr{\sv}(0)/n$ is within distance $\delta(c,\kappa)$
of $\Sigma$, yet  at some time $0 \leq t \leq \kappa n$, 
the numbers of unhappy $\xx$'s and $\oo$'s differ
by more than $2cn$ in the segregation process on $G$.}

The assumption that $\vctr{\sv}(0)/n$ is within distance $\delta(c,\kappa)$
of $\Sigma$ implies that it belongs to $S(c,\kappa)$ and hence that the
differential equation solution satisfies $|\Delta(z(x))| < c$ for all
$x \in [0,\kappa]$, and in particular this holds when $x = t/n$.  Our 
assumption about the imbalance between unhappy $\xx$'s and $\oo$'s at
time $t$ implies that $|\Delta(\vctr{\sv}(t))/n| > 2c$ and hence that
$|\Delta(\vctr{\sv}(t)/n - z(t/n))| > c$.  Using the formula defining
$\Delta$, and recalling that $|u(\sigma)| \leq 1$ for all $\sigma$,
we now see that 
$$\sum_\sigma |\sv_\sigma(t) - n z_\sigma(t/n)| > c n$$
hence  there exists $\sigma$ such that 
$\sv_\sigma(t) - n z_{\sigma}(t/n) > \tfrac{cn}{L(w)}$.  
For sufficiently large $n$ this exceeds the $O(n^{3/4})$ 
error term in Wormald's theorem, hence the probability of this case 
occurring is $O(n^{1/4} \exp(-n^{1/4}))$, which is less than 
$\tfrac{1}{\gamma n}$ for sufficiently large $n$.
\item  {\em At some time $0 \leq t \leq \kappa n$,
the numbers of unhappy $\xx$'s and $\oo$'s
differ by at most $2cn$ in the segregation process on $G$,
yet they differ by more than $3cn$ in the segregation process on $C_n$.}

In this case, the number of tainted nodes at time $t$ must be greater
than $cn$.  
Therefore, the probability that this event occurs
at any particular time $t$ is at most 
$\exp \left( - \tfrac{wn}{36 L(w)^2} \right)$.
Taking the union bound over all $t$ in the range $0,\ldots, \kappa n$, 
we find that the probability of this case is bounded above
by $\kappa n \exp \left( - \tfrac{wn}{36 L(w)^2} \right)$,
which is again less than $\tfrac{1}{\gamma n}$ for sufficiently 
large $n$.
\end{enumerate*}
Combining the upper bounds for the 
probabilities of these three bad events
using the union bound, we obtain the 
probability estimate stated in the theorem.
\end{proof}

In order to obtain Theorem~\ref{thm:balance} from
Theorem~\ref{thm:balance-toogeneral} it is necessary
to specify values for $L(w), c(w), \kappa(w)$.  
We define
\begin{align}
\label{eq:cdef}
c(w) &= \tfrac13 w^{-4} \\
\label{eq:kdef}
\kappa(w) &= 2 w^2 \ln(w) \\
\label{eq:ldef}
L(w) &= 12 e^{12 w \kappa(w)} 
\end{align}
These choices are justified by the following proposition,
which completes the proof of Theorem~\ref{thm:balance}.

\begin{prop} \label{prop:kappa}
Define $c(w),\kappa(w),L(w)$ as in~\eqref{eq:cdef}-\eqref{eq:ldef}
above.  Let $T_0$ be the earliest time at which fewer than $3n/w^2$
individuals are impatient and let 
$T_1$ 
be the earliest time at which the numbers of 
unhappy 
$\xx$'s and 
$\oo$'s differ by more than $3 c(w) n = n/w^4$.  The
probability that 
$T_0 \leq \kappa(w) n < T_1$
is at least  $1-1/w$, for all sufficiently
large $n$.
\end{prop}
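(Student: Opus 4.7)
The plan is to combine Theorem~\ref{thm:balance-toogeneral} (which handles $T_1$) with a stopped supermartingale argument (which handles $T_0$). First I would substitute the choices \eqref{eq:cdef}--\eqref{eq:ldef} into the hypothesis $c > 2 e^{12 w \kappa} (2w-1)/L$ of Theorem~\ref{thm:balance-toogeneral} and verify it; the exponential factor in $L(w)$ is designed precisely to absorb $e^{12 w \kappa}$ on the right side. Applying the theorem gives: with probability at least $1 - 3/(\gamma n)$, the numbers of unhappy $\xx$'s and $\oo$'s differ by at most $3 c(w) n = n/w^4$ at every time $t \leq \kappa(w) n$, and in particular $T_1 > \kappa(w) n$ on this event.

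To bound $T_0$, I would define $P(t) = \{i : \sat{i} > t\}$, the monotonically shrinking set of nodes that have not yet reached their satisfaction time. By Definition~\ref{def:sattm}, every impatient node at time $t$ lies in $P(t)$, so if $I(t)$ denotes the set of impatient individuals then $|I(t)| \leq |P(t)|$. Let $\sigma = \min\{T_0, T_1, \kappa(w) n\}$ and set $\tilde P(t) = |P(t \wedge \sigma)|$. On the event $\{\sigma > t\}$, both the balance condition and $|I(t)| \geq 3n/w^2$ hold, so the number of unhappy individuals of each type is at least $(3n/w^2 - n/w^4)/2 \geq n/w^2$ for sufficiently large $w$. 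Under this condition, for each $i \in P(t)$ the probability that the next proposed swap pairs $i$ with an unhappy opposite-labeled partner is at least $\frac{2 U_{\mathrm{opp}}(t)}{n(n-1)} \geq \frac{2}{w^2(n-1)}$, yielding $\expect{\tilde P(t+1) \mid \tilde P(t)} \leq \tilde P(t)(1 - 2/(w^2(n-1)))$. Iterating over $\kappa(w) n = 2 w^2 \ln(w) n$ steps gives $\expect{\tilde P(\kappa(w) n)} \leq n \cdot e^{-2\kappa(w)/w^2}(1+o(1)) = (n/w^4)(1+o(1))$.

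By Markov's inequality, $\Pr[\tilde P(\kappa(w) n) \geq 3n/w^2] \leq (1+o(1))/(3 w^2)$. On the intersection of this event's complement with $\{T_1 > \kappa(w) n\}$, I claim $T_0 \leq \kappa(w) n$: otherwise $\sigma = \kappa(w) n$ and hence $\tilde P(\kappa(w) n) = |P(\kappa(w) n)| < 3n/w^2$; but $T_0 > \kappa(w) n$ forces $|I(\kappa(w) n)| \geq 3n/w^2$, contradicting $|I| \leq |P|$. A union bound over balance failure and the Markov event then gives failure probability at most $3/(\gamma n) + (1+o(1))/(3 w^2)$, which is below $1/w$ for all sufficiently large $n$ and all $w \geq 2$.

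The main obstacle is the intertwining of the two stopping-time events: the supermartingale estimate on $|P(t)|$ requires the lower bound $U_x, U_o \geq n/w^2$, which is only guaranteed before $\sigma$, while the conclusion concerns times up to $\kappa(w) n$. Stopping $P$ at $\sigma$ handles this cleanly. The remaining ingredients --- verifying the hypothesis of Theorem~\ref{thm:balance-toogeneral}, passing to the exponential bound, and combining via the union bound --- are routine, and the parameter choices \eqref{eq:cdef}--\eqref{eq:ldef} are tuned so that the exponential decay rate $2/(n w^2)$ over the horizon $2 w^2 \ln(w) n$ leaves a polynomial-in-$w$ margin for Markov's inequality to beat the target probability $1/w$.
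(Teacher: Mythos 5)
Your overall strategy matches the paper's exactly: Theorem~\ref{thm:balance-toogeneral} controls $T_1$, a stopped-process argument controls $T_0$, and a union bound combines the two. The contradiction argument at the end ($\sigma = \kappa(w)n$ when $T_0, T_1 > \kappa(w)n$, forcing $|P(\kappa(w)n)| \geq |I(\kappa(w)n)| \geq 3n/w^2$) is also the intended one, and your per-step swap-probability computation is equivalent to the paper's.

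However, there is a genuine gap in the middle step. The inequality $\expect{\tilde P(t+1) \mid \tilde P(t)} \leq \tilde P(t)\bigl(1 - 2/(w^2(n-1))\bigr)$ holds only on the event $\{\sigma > t\}$; for $t \geq \sigma$ the stopped process is constant, $\tilde P(t+1) = \tilde P(t)$, which strictly violates the inequality. Consequently you cannot simply iterate to conclude $\expect{\tilde P(\kappa(w)n)} \leq n\, e^{-2\kappa(w)/w^2}(1+o(1))$. In the extreme, if $\sigma$ were hit at time $O(1)$ with constant probability, $\tilde P(\kappa(w)n)$ would be $\Theta(n)$ with constant probability and your Markov bound would be vacuous. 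The correct repair is the one the paper uses: track a reweighted process that is genuinely a supermartingale throughout. Concretely, define $Z_t = \tilde P(t)\cdot\bigl(1 - 2/(w^2(n-1))\bigr)^{-(t\wedge\sigma)}$ (the analogue of the paper's $Y_t = e^{2t/(w^2 n)}\phi(t)$ stopped at $\min\{T_0,T_1\}$); then $\expect{Z_{\kappa(w)n}} \leq Z_0 = n$, and on $\{T_0 > \kappa(w)n,\, T_1 > \kappa(w)n\}$ one has $Z_{\kappa(w)n} \geq (3n/w^2)\cdot w^4(1-o(1)) = 3nw^2(1-o(1))$, from which Markov gives the $\Theta(1/w^2)$ failure probability. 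You had the right stopping time and the right inequalities before the stop; the only missing piece is multiplying by the compensating exponential factor before invoking Markov, rather than after.

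One small remark that is not your fault but affects the claim that ``verifying the hypothesis of Theorem~\ref{thm:balance-toogeneral} is routine'': with the paper's stated $L(w) = 12 e^{12w\kappa(w)}$, the quantity $2e^{12w\kappa(w)}(2w-1)/L(w)$ simplifies to $(2w-1)/6$, which is \emph{not} less than $c(w) = 1/(3w^4)$ for $w \geq 2$. This appears to be a typo in the definition of $L(w)$ (a factor of $\mathrm{poly}(w)$, say $w^5$, is needed so that the leftover polynomial is absorbed). So the verification is not literally routine with the constants as written, though the intended calculation is.
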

\begin{proof}
Define a \emph{pending node} to be a node that has not yet
reached its satisfaction time.
Let $\phi_\xx(t), \phi_\oo(t)$ 
denote 
the fraction of pending nodes at time $t$ 
that are labeled with $\xx, \oo$, respectively,
and let $\phi(t) = \phi_\xx(t) + \phi_\oo(t)$.
Similarly, let $\psi_\xx(t), \psi_\oo(t)$ denote
the fraction of unhappy nodes at time $t$
that are labeled with $\xx,\oo$, respectively,
and let $\psi(t) = \psi_\xx(t) + \psi_\oo(t)$.

Note that $\phi_\xx(t+1) = \phi_\xx(t)$ unless
the proposed swap at time $t$ involves a pending $\xx$
and an unhappy $\oo$, in which case 
$\phi_\xx(t+1) = \phi_\xx(t) - \tfrac{1}{n}$
since exactly one $\xx$ is no longer pending.
The probability of proposing a swap between a
pending $\xx$ and an unhappy $\oo$ at time $t$ is
$2 \phi_\xx(t) \psi_\oo(t)$.  Hence
\begin{equation} \label{eq:phixx}
\E[\phi_\xx(t+1)-\phi_\xx(t) \mid \phi_\xx(t),\psi_\oo(t)]
= - \tfrac{2}{n} \phi_\xx(t) \psi_\oo(t).
\end{equation}
Similarly,
\begin{equation} \label{eq:phioo}
\E[\phi_\oo(t+1)-\phi_\oo(t) \mid \phi_\oo(t),\psi_\xx(t)]
= - \tfrac{2}{n} \phi_\oo(t) \psi_\xx(t).
\end{equation}
Using $\phi_*(t),\psi_*(t)$ as shorthand
for the quadruple of random variables 
$(\phi_\xx(t),\phi_\oo(t),\psi_\xx(t),\psi_\oo(t))$ and summing,
we obtain
\begin{eqnarray} 
\E[\phi(t+1)-\phi(t) \mid \phi_*(t),\psi_*(t)] = 
-\tfrac{2}{n} (\phi_\xx(t) \psi_\oo(t) + \phi_\oo(t) \psi_\xx(t)).
\label{eq:phi1}
\end{eqnarray}
We can obtain an upper bound on the right side of~\eqref{eq:phi1},
provided that $t < \min\{T_0,T_1\}$.  
Indeed, for any such $t$ we have $\psi(t) \geq 3/w^2$
(since there are at least $3n/w^2$ impatient individuals
at time $t$ and all of them are unhappy),
whereas 
$$|\psi_\xx(t)-\psi_\oo(t)| \leq 1/w^4 \leq 1/w^2 \leq \psi(t)/3.$$
Since $\psi_\xx(t) + \psi_\oo(t) = \psi(t)$, it follows that
$\min\{\psi_\xx(t),\psi_\oo(t)\} \geq \psi(t)/3$ and consequently,
\begin{eqnarray}
\E[\phi(t+1)-\phi(t) \mid \phi_*(t),\psi_*(t)] \leq -\tfrac{2}{n} (\phi_\xx(t) + \phi_\oo(t)) \tfrac{\psi(t)}{3} 
\leq -\tfrac{2}{w^2 n} \phi(t),
\label{eq:phi2}
\end{eqnarray}
where the last inequality used the facts that $\psi(t) \geq 3/w^2$
and $\phi(t) = \phi_\xx(t) + \phi_\oo(t)$.  Rearranging terms
in~\eqref{eq:phi2} we obtain the bound
\begin{equation*}
E[\phi(t+1) \mid \phi_*(t),\psi_*(t)] \leq
\left( 1 - \tfrac{2}{w^2 n} \right) \phi(t)
< e^{-2/w^2 n} \phi(t),
\end{equation*}
which implies that the sequence of random variables
defined by 
$$
Y_t = \begin{cases}
e^{2t/w^2 n} \phi(t) & \mbox{if $t < \min\{T_0,T_1\}$} \\
Y_{t-1} & \mbox{otherwise}
\end{cases}
$$ 
is a supermartingale.
The initial condition $Y_0 \leq 1$ now implies that for all $t$,
$\E[Y_t] \leq 1$.

Now we specialize to a fixed value of $t$,
namely $$t=\kappa(w) n = 2 w^2 \ln(w) n.$$
If $t < \min\{T_0,T_1\}$ then
the number of impatient nodes at time $t$ is at least 
$3n/w^2$, hence $\phi(t) \geq 3/w^2$.  This implies
$$
Y_t = e^{2t/w^2 n} \phi(t) \geq w^4 \cdot (3/w^2) = 3 w^2.
$$
Using Markov's inequality,
\begin{align*}
\Pr[\kappa(w) n < \min\{T_0,T_1\}] \leq
\Pr[Y_t \geq 3 w^2] \leq \tfrac{1}{3 w^2}.
\end{align*}
Theorem~\ref{thm:balance-toogeneral} says that
for some constant $\gamma(w)$, 
$$\Pr[\kappa(w) n < T_1] \geq 1 - \tfrac{3}{\gamma(w) n}.$$
But if $\kappa(w) n < T_1$ and $\kappa(w) n \not< \min\{T_0,T_1\}$,
it means that $T_0 \leq \kappa(w) n < T_1$.
Thus, by the union bound,
\[
\Pr[T_0 \leq \kappa(w) n < T_1] \geq 1 - 
\tfrac{3}{\gamma(w) n} - \tfrac{1}{3 w^2}.
\]
For sufficiently large $n$, the right side
is greater than $1 - 1/w$, as desired.
\end{proof}


\end{document}